\pdfoutput=1
\documentclass[letterpaper,abstract=true]{scrartcl}
\usepackage[letterpaper,margin=2.5cm,includefoot]{geometry}

\usepackage[utf8]{inputenc}
\usepackage[english]{babel}
\usepackage[T1]{fontenc}

\usepackage{lmodern}
\usepackage{amsmath,amssymb,mathtools,bm}
\usepackage{csquotes}
\usepackage{amstext}
\usepackage{amsthm}
\usepackage{bbm}
\usepackage{enumerate}
\usepackage{aliascnt}
\usepackage{hyperref}
\usepackage[nameinlink,capitalise]{cleveref}
\usepackage{dsfont}
\usepackage{comment}
\usepackage{color}

\usepackage{fancyvrb}

\makeatletter

\numberwithin{equation}{section}

\newtheorem{thm}{Theorem}[section]

\newaliascnt{lem}{thm}
\newtheorem{lem}[lem]{Lemma}
\aliascntresetthe{lem}

\newaliascnt{prop}{thm}
\newtheorem{prop}[prop]{Proposition}
\aliascntresetthe{prop}

\newaliascnt{cor}{thm}
\newtheorem{cor}[cor]{Corollary}
\aliascntresetthe{cor}

\theoremstyle{definition}

\newaliascnt{defn}{thm}

\aliascntresetthe{defn}

\newtheorem{hyp}{Hypothesis}
\renewcommand*{\thehyp}{\Alph{hyp}}

\theoremstyle{remark}

\newaliascnt{rem}{thm}
\newtheorem{rem}[rem]{Remark}
\aliascntresetthe{rem}

\newaliascnt{ex}{thm}
\newtheorem{ex}[ex]{Example}
\aliascntresetthe{ex}

\crefname{thm}{Theorem}{Theorems}
\Crefname{thm}{Theorem}{Theorems}

\crefname{lem}{Lemma}{Lemmas}
\Crefname{lem}{Lemma}{Lemmas}

\crefname{prop}{Proposition}{Propositions}
\Crefname{prop}{Proposition}{Propositions}

\crefname{cor}{Corollary}{Corollaries}
\Crefname{cor}{Corollary}{Corollaries}

\crefname{defn}{Definition}{Definitions}
\Crefname{defn}{Definition}{Definitions}

\crefname{rem}{Remark}{Remarks}
\Crefname{rem}{Remark}{Remarks}

\crefname{ex}{Example}{Examples}
\Crefname{ex}{Example}{Examples}

\crefname{hyp}{Hypothesis}{Hypotheses}
\Crefname{hyp}{Hypothesis}{Hypotheses}
\makeatletter
\renewcommand{\@upn}{} \makeatother

\usepackage[inline]{enumitem}
\newlist{enumthm}{enumerate}{1} \setlist[enumthm]{label=\upshape(\roman*),ref=\thethm~(\roman*)}  \crefalias{enumthmi}{thm} \newlist{enumcor}{enumerate}{1}
\setlist[enumcor]{label=\upshape(\roman*),ref=\thecor~(\roman*)}
\crefalias{enumcori}{cor}
\newlist{enumlem}{enumerate}{1}
\setlist[enumlem]{label=\upshape(\roman*),ref=\thelem~(\roman*)}
\crefalias{enumlemi}{lem}
\newlist{enumprop}{enumerate}{1}
\setlist[enumprop]{label=\upshape(\roman*),ref=\theprop~(\roman*)}
\crefalias{enumpropi}{prop}
\newlist{enumhyp}{enumerate}{1}
\setlist[enumhyp]{label=\upshape(\roman*),ref=\thehyp~(\roman*)}
\crefalias{enumhypi}{hyp}
\newlist{enumproof}{enumerate*}{1}
\setlist[enumproof]{label=\upshape(\roman*)}
\newlist{enumdef}{enumerate}{1}
\setlist[enumdef]{label=\upshape(\roman*),ref=\thedefn~(\roman*)}
\crefalias{enumdefi}{defn}
\newlist{enumrem}{enumerate}{1}
\setlist[enumrem]{label=\itshape(\alph*),ref=\therem~(\alph*)}
\crefalias{enumrem}{rem}

\makeatletter
\newcounter{subcreftmpcnt} \newcommand\romansubformat[1]{(\roman{#1})} \def\subcref{\@ifstar\@@subcref\@subcref}
\newcommand\@subcref[2][\romansubformat]{\ifcsname r@#2@cref\endcsname
	\cref@getcounter {#2}{\mylabel}\setcounter{subcreftmpcnt}{\mylabel}\hyperref[#2]{\romansubformat{subcreftmpcnt}}\else ?? \fi}   
\newcommand\@@subcref[2][\romansubformat]{\ifcsname r@#2@cref\endcsname
	\cref@getcounter {#2}{\mylabel}\setcounter{subcreftmpcnt}{\mylabel}\romansubformat{subcreftmpcnt}\else ?? \fi}   
\makeatother

\makeatletter
\DeclareRobustCommand{\crefnosort}[1]{\begingroup\@cref@sortfalse\cref{#1}\endgroup
}
\makeatother

\def\endstepsymbol{$\lozenge$}
\def\endclaimsymbol{$\lozenge$}
\newcounter{proofstep}
\AtBeginEnvironment{proof}{\setcounter{proofstep}{0}}

\crefname{proofstep}{Step}{Steps}
\Crefname{proofstep}{Step}{Steps}
\newcounter{proofclaim}
\AtBeginEnvironment{proof}{\setcounter{proofclaim}{0}}

\crefname{proofclaim}{Claim}{Claims}
\Crefname{proofclaim}{Claim}{Claims}

\newcommand{\cB}{{\mathcal B}}

\newcommand{\cH}{{\mathcal H}}

\newcommand{\fh}{{\mathfrak h}}

\newcommand{\BC}{{\mathbb C}}

\newcommand{\BN}{{\mathbb N}}
\newcommand{\BR}{{\mathbb R}}

\usepackage{dsfont}

\newcommand{\dsone}{{\mathds 1}}

\usepackage{mathrsfs}

\newcommand{\sfd}{{\mathsf d}}
\newcommand{\sfi}{{\mathsf i}}

\newcommand{\IN}{\BN}\newcommand{\IR}{\BR}\newcommand{\IC}{\BC}

\newcommand{\NN}{\BN}

\newcommand{\eps}{\varepsilon}

\renewcommand{\i}{\sfi}\newcommand{\Id}{\dsone} \renewcommand{\d}{\sfd}

\renewcommand{\Re}{\operatorname{Re}}

\newcommand{\supp}{\operatorname{supp}}

\DeclareFontFamily{U}{mathx}{\hyphenchar\font45}
\DeclareFontShape{U}{mathx}{m}{n}{
	<5> <6> <7> <8> <9> <10>
	<10.95> <12> <14.4> <17.28> <20.74> <24.88>
	mathx10
}{}
\DeclareSymbolFont{mathx}{U}{mathx}{m}{n}
\DeclareFontSubstitution{U}{mathx}{m}{n}
\DeclareMathAccent{\widecheck}{0}{mathx}{"71}
\DeclareMathAccent{\wideparen}{0}{mathx}{"75}

\DeclareFontFamily{OMX}{MnSymbolE}{}
\DeclareFontShape{OMX}{MnSymbolE}{m}{n}{
	<-6>  MnSymbolE5
	<6-7>  MnSymbolE6
	<7-8>  MnSymbolE7
	<8-9>  MnSymbolE8
	<9-10> MnSymbolE9
	<10-12> MnSymbolE10
	<12->   MnSymbolE12}{}
\DeclareSymbolFont{mnlargesymbols}{OMX}{MnSymbolE}{m}{n}
\SetSymbolFont{mnlargesymbols}{bold}{OMX}{MnSymbolE}{b}{n}
\DeclareMathDelimiter{\llangle}{\mathopen}{mnlargesymbols}{'164}{mnlargesymbols}{'164}
\DeclareMathDelimiter{\rrangle}{\mathclose}{mnlargesymbols}{'171}{mnlargesymbols}{'171}
\DeclareMathDelimiter{\lsem}{\mathopen}{mnlargesymbols}{'102}{mnlargesymbols}{'102}
\DeclareMathDelimiter{\rsem}{\mathclose}{mnlargesymbols}{'107}{mnlargesymbols}{'107}
\DeclareMathDelimiter{\langlebar}{\mathopen}{mnlargesymbols}{'152}{mnlargesymbols}{'152}
\DeclareMathDelimiter{\ranglebar}{\mathclose}{mnlargesymbols}{'157}{mnlargesymbols}{'157}
\DeclareMathDelimiter{\lWavy}{\mathopen}{mnlargesymbols}{'137}{mnlargesymbols}{'137}
\DeclareMathDelimiter{\rWavy}{\mathopen}{mnlargesymbols}{'137}{mnlargesymbols}{'137}

\newcommand{\chr}{\mathbf 1}
\newcommand{\abs}[1]{\lvert#1\rvert}\newcommand{\Abs}[1]{\left\lvert#1\right\rvert}
\newcommand{\norm}[1]{\lVert#1\rVert}
\newcommand{\Norm}[1]{\left\lVert#1\right\rVert}

\newcommand{\FGamma}{\Gamma}
\newcommand{\dG}{\sfd\FGamma}

\newcommand{\nn}[1]{\Norm{#1}}

\renewcommand{\:}{\colon}

\usepackage{xcolor}\usepackage{cancel}
\definecolor{green}{rgb}{0.0, 0.5, 0.5}
\definecolor{yellow}{rgb}{0.5, 0.5, 0}
\definecolor{lgray}{gray}{0.9}
\definecolor{llgray}{gray}{0.95}
\definecolor{lllgray}{gray}{0.975}

\definecolor{darkcerulean}{rgb}{0.03, 0.27, 0.49} 	\definecolor{darkcoral}{rgb}{0.8, 0.36, 0.27}

\newcommand{\Af}{\mathfrak{A}}

\newcommand{\Nf}{\mathcal{N}}

\newcommand{\po}{\mathsf{p}}
\newcommand{\ho}{h}

\usepackage[backend=biber,style=alphabetic,giveninits=true,maxnames=5,maxalphanames=4, ,url=true,isbn=false,eprint=false,alldates=year]{biblatex}
\DeclareFieldFormat{doilink}{\iffieldundef{doi}{#1}{\href{http://dx.doi.org/\thefield{doi}}{#1}}}

\DeclareBibliographyDriver{article}{\usebibmacro{bibindex}\usebibmacro{begentry}\usebibmacro{author/translator+others}\setunit{\labelnamepunct}\newblock
	\usebibmacro{title}\newunit
	\printlist{language}\newunit\newblock
	\usebibmacro{byauthor}\newunit\newblock
	\usebibmacro{bytranslator+others}\newunit\newblock
	\printfield{version}\newunit\newblock
	\usebibmacro{in:}\printtext[doilink]{\usebibmacro{journal+issuetitle}\newunit
		\usebibmacro{byeditor+others}\newunit
		\usebibmacro{note+pages}}\newunit\newblock
	\iftoggle{bbx:isbn}
	{\printfield{issn}}
	{}\newunit\newblock
	\usebibmacro{doi+eprint+url}\newunit\newblock
	\usebibmacro{addendum+pubstate}\setunit{\bibpagerefpunct}\newblock
	\usebibmacro{pageref}\usebibmacro{finentry}}

\newbibmacro{string+doi}[1]{\iffieldundef{doi}{#1}{\href{https://doi.org/\thefield{doi}}{#1}}}
\DeclareFieldFormat{title}{\usebibmacro{string+doi}{\mkbibemph{#1}}}
\DeclareFieldFormat[inproceedings]{title}{\usebibmacro{string+doi}{\mkbibquote{#1}}}
\DeclareFieldFormat[incollection]{title}{\usebibmacro{string+doi}{\mkbibquote{#1}}}
\DeclareFieldFormat[unpublished]{title}{\usebibmacro{string+doi}{\mkbibquote{#1}}}

\renewbibmacro{in:}{\ifentrytype{article}{}{\printtext{\bibstring{in}\intitlepunct}}}

\newcommand{\Fc}{\mathcal{F}}

\newcommand{\cs}[1]{\left< #1 \right>}

\newcommand{\Def}{\mathcal{D}}

\newcommand{\Hb}{\mathbf{H}}
\newcommand{\Ec}{\mathcal{E}}

\begin{document}

\title{Discontinuity of Continuum Fermion Dynamics}
\author{Oliver Siebert\footnote{ \href{mailto:osiebert@ucdavis.edu}{osiebert@ucdavis.edu}} \\ \small{Department of Mathematics, University of California, Davis}}

\maketitle

\begin{abstract}
	We study the Heisenberg dynamics of continuum fermions in $\IR^d$ interacting through a pair potential. For a large class of nonconstant pair potentials and under suitable assumptions on the dispersion relation and external potential, we prove that the dynamics fails to be pointwise-norm continuous even on the gauge-invariant CAR algebra.  An automatic-continuity argument then shows that neither the CAR algebra nor its gauge-invariant subalgebra is invariant under the dynamics, with non-invariance occurring for almost all times.
\end{abstract}

\section{Introduction}

In the algebraic approach to quantum systems in infinite volume, a
$C^*$-dynamical system
$
\bigl(\Af,(\tau_t)_{t\in\IR}\bigr)
$
consists of a $C^*$-algebra of observables $\Af$ and a one-parameter
group of $*$-automorphisms $(\tau_t)_{t\in\IR}$ such that
\begin{align}
\label{eq:norm cont}	
\lim_{t\to0}\norm{\tau_t(A)-A}=0,
\qquad \text{ for all }A\in\Af.
\end{align}

This continuity property is also commonly referred to as strong continuity of the dynamics, but throughout this paper, we call it
pointwise-norm continuity in order to avoid confusion with  continuity in
the strong operator topology.  A $C^*$-dynamical system provides a
state-independent formulation of the time evolution and is the
natural starting point for the study of ground states, equilibrium
states satisfying the KMS condition, and nonequilibrium phenomena in infinite volume within 
algebraic quantum statistical mechanics
\cite{BR1,BR2,sakai1991operator}. The observable algebra should be large enough to contain the physically relevant observables and to be invariant under the Heisenberg evolution, but sufficiently restrictive for this evolution to be pointwise-norm continuous and so that infinite-volume equilibrium states can be canonically defined.

For fermionic systems, the free dynamics has this property on the CAR
algebra: if the many-body Hamiltonian is the second
quantization of a self-adjoint one-particle operator $h$, then the
Heisenberg evolution is the Bogoliubov dynamics induced by the
one-particle group $e^{\i th}$.  Since the fermionic creation and
annihilation operators are bounded and satisfy
$
\norm{a^*(f)}=\norm{a(f)}=\norm f,
$
the strong continuity of $e^{\i th}$ on the one-particle space
immediately implies pointwise-norm continuity on the entire CAR algebra.

However, the free case is a peculiarity, as interacting fermions in the continuum in general face a
many-body ultraviolet problem. Under suitable assumptions on the potentials, the Hamiltonian
is self-adjoint on every fixed particle sector, hence also on the full Fock space, and the
resulting dynamics is continuous in the strong
operator topology. Nevertheless, this does not imply pointwise-norm continuity or invariance of the CAR algebra. The operator norm simultaneously probes all particle sectors, and the interaction energy seen by one particle can grow indefinitely as the number of surrounding particles increases.

This obstruction was already discussed in
\cite[Introduction to Sec.~6.3, p.~354]{BR2}.  Although
the Pauli principle prevents several fermions from occupying the same quantum state,  arbitrarily many modes with
increasing momenta can be packed into small regions.  A particle at a fixed distance from such a high-density cloud then experiences an interaction whose strength grows proportionally to the number of particles.  It was therefore expected that the usual
point-particle interaction should not generate a pointwise-norm continuous dynamics on the CAR algebra.  
This expectation was repeated in
\cite{GebertNachtergaeleReschkeSims.2020}, where it was observed that
an unregularized interaction has an unbounded commutator with a
creation or annihilation operator.  However, such a commutator argument by
itself proves neither norm discontinuity nor non-invariance of the CAR algebra.   A different heuristic no-go argument against the preservation of quasi-local observables was given in \cite{radin}, based on an exactly solvable one-dimensional hard-core model of distinguishable particles.

The ultraviolet problem was addressed in earlier work through the introduction of UV regularizations in the interaction.  Narnhofer and Thirring \cite{narnhofer1990quantum,narnhofer1991galilei}  introduced a cutoff at high relative momenta and proved that it generates a $C^*$-dynamical system over the CAR algebra. The idea is that the cutoff causes the  Dyson series to be summable in operator norm. 
Their construction
also gave equilibrium states, whose uniqueness in a high-temperature,
low-density regime was later proved in \cite{jakel1995uniqueness}.

Another UV regularization was introduced in
\cite{GebertNachtergaeleReschkeSims.2020}, where the creation and
annihilation operators in the interaction are smeared with a fixed
Gaussian profile. Restricting the interaction to finite volume then yields a bounded perturbation of the free, well-behaved dynamics. 
 They proved Lieb--Robinson bounds which show the convergence of the dynamics, uniformly on compact time intervals, as the volume tends to the whole space. Later in
 \cite{hinrichs2024lieb}, the
 assumptions on the one-particle and pair potentials were
 relaxed, and the associated propagation bounds improved.  
 Another construction based on
 lattice-localized frames, suitable
 for continuum fermions in magnetic fields, was developed in
 \cite{bachmann2024liebrobinson}, which allowed for Lieb--Robinson proof techniques almost like for discrete systems. In fact, for lattice systems, Lieb--Robinson bounds have become a standard tool in order to obtain a well-behaved infinite-volume dynamics \cite{td1,td2,td3,td4,irreversible,deuchert2025dynamics}.

 The bosonic situation is even more troublesome.  The standard bounded formulation
 of the canonical commutation relations is the Weyl algebra generated by exponentials of the field operators, on which
 even the free dynamics is not pointwise-norm continuous. In \cite{dubin} it was also shown that the norm convergence of finite-volume dynamics fails for the ideal Bose gas and the (fermionic, mean-field) BCS model.
 Moreover, interacting dynamics may fail to preserve the algebra already for finitely many degrees of freedom.  In fact,
 in \cite{FannesVerbeure1974} it was proven that a non-relativistic Schrödinger operator with potential $V\in L^\infty \cap L^1$ does not leave the Weyl algebra invariant unless $V=0$. The resolvent algebra of Buchholz and Grundling \cite{bh}, generated by
 resolvents of the fields, resolves this invariance issue.  For nonrelativistic
 Bose fields, Buchholz showed that a slight extension of the resolvent algebra is also invariant under many-body dynamics with unregularized pair interactions \cite{bh1,bh2}.  The dynamics is only continuous on each fixed
 particle sector, but time averaging yields a pointwise-norm continuous
 $C^*$-dynamical subsystem.
 
A similar extension approach as in \cite{bh1,bh2} was taken in \cite{my_tdlimit} for fermions. There, a slightly extended CAR algebra was constructed by completing the CAR algebra with respect to the family of operator norms
on fixed particle sectors.  The necessity of
the extension was motivated by the high-density argument described
above but remained without a rigorous discontinuity and non-invariance theorem, which seemed to be missing in the current literature.

The purpose of this note is to provide this missing negative
result.  Indeed, we will prove such a negative result for a large class of Hamiltonians whose $n$-particle sectors are of the form 
$$
H_n = \sum_{j=1}^n\bigl(\omega(\po_j)+U(x_j)\bigr)
+ \sum_{\substack{1\leq i,j\leq n\\i\neq j}} W(x_i-x_j),
$$
with general assumptions on the dispersion relations $\omega$, the pair potential $W$ and the external potential $U$. This includes the non-relativistic dispersion relation
$\omega(p)=p^2$ in dimensions $d\geq2$ and the relativistic
dispersion relation
$\omega(p)=\sqrt{p^2+1}$ in every dimension $d\geq1$.  The pair
potential may be unbounded and singular, including the Coulomb interaction in $d\geq3$. The external potential includes smooth bounded fields and multi-centre Coulomb potentials  for $\omega(p) = p^2$ in $d=3$.

 We will first show the violation of the pointwise-norm continuity \eqref{eq:norm cont}, which occurs already for a single creation or annihilation operator, or for a single pair of creation and annihilation operators in the gauge-invariant case. Using the separability of the CAR algebra over $L^2$, an automatic-continuity argument will then imply non-invariance for almost every time. This result provides a justification for the extension of the CAR algebra as in \cite{my_tdlimit}. It also shows that the
 usual application of Lieb--Robinson bounds for the construction of
 pointwise-norm continuous infinite-volume dynamics on the CAR
 algebra as in \cite{GebertNachtergaeleReschkeSims.2020,hinrichs2024lieb}
 cannot extend to the unregularized interactions considered here.

In \Cref{sec:setup-main} we introduce the Fock space formalism, formulate
the hypotheses on the dispersion relation and the potentials, and
state the main results. Subsequently, we discuss some specific examples where these hypotheses are satisfied. \Cref{sec:proof} contains all the proofs, beginning with a brief outline. In particular, the proof of the main result, \Cref{thm:main-discontinuity}, can be found in \Cref{sec:main result proof}.

\section{Setup and Main Results}
\label{sec:setup-main}

\subsection{Fermionic Fock space and the CAR algebra}

Let $d\geq 1$, $\fh=L^2(\IR^d)$ be the one-particle space, and 
\[
    \Fc:=\bigoplus_{n=0}^{\infty}\bigwedge\nolimits^n\fh,
    \qquad
    \bigwedge\nolimits^0\fh=\IC
\]
be the standard fermionic Fock space over $\fh$. For $f\in\fh$,
let $a^*(f)$ and $a(f)$ denote the usual fermionic creation and
annihilation operators on $\Fc$.  They are bounded with $\nn{a^*(f)} = \nn{a(f)} = \nn{f}$, and satisfy the CAR
\[
    \{a(f),a^*(g)\}=\langle f,g\rangle\Id,
    \qquad
    \{a(f),a(g)\}=\{a^*(f),a^*(g)\}=0.
\]

For a self-adjoint operator $T$ on $\fh$, the lift to the Fock space $\Fc$ is the self-adjoint operator
\[
\dG(T):= 0 \oplus \bigoplus_{n=1}^\infty \overline{
	\left.
	\sum_{j=1}^n
	\Id^{\otimes(j-1)}\otimes T
	\otimes\Id^{\otimes(n-j)}
	\right|_{\bigwedge_{\mathrm{alg}}^n\Def(T)}
},
\]
where the overline denotes the closure and the direct sum is taken on its natural maximal domain
\[
\Def(\dG(T))
=
\left\{
(\Psi_n)_{n\in \IN_0}\in\Fc:
\Psi_n\in\Def(\dG(T)_n),\
\sum_{n=0}^\infty\norm{\dG(T)_n\Psi_n}^2<\infty
\right\},
\]
with $\dG(T)_n$ denoting  its restriction to the $n$-particle sector. 
The global number operator, and for a measurable set $A\subseteq\IR^d$, the local number operator are defined as
\[
\Nf:=\dG(\Id),
\qquad
\Nf_A:=\dG(\chr_A),
\]
where $\chr_A$ denotes multiplication by its indicator function.
By abuse of notation, when these operators act on an
$n$-particle vector, we suppress the sector subscript:
$\dG(T)$, $\Nf$, and $\Nf_A$ then mean
$\dG(T)_n$, $\Nf_n$, and $\Nf_{A,n}$, respectively.

The CAR algebra is the $C^*$-algebra generated by the creation and annihilation operators,
\[
    \Af_{\mathrm{CAR}}
    :=C^*\bigl(a(f) :f\in\fh\bigr)
    \subset\cB(\Fc).
\]
The gauge-invariant CAR algebra is the $C^*$-subalgebra of $\Af_{\mathrm{CAR}}$, which does not change the number of particles, and can be defined via
\[
    \Af^0_{\mathrm{CAR}}
    :=\{A\in\Af_{\mathrm{CAR}}:
          e^{\i\vartheta \Nf}Ae^{-\i\vartheta \Nf} =A\text{ for every }\vartheta\in\IR\}.
\]
It is generated by all observables with the same number of creation and annihilation operators. In particular, 
\[
    n_f:=a^*(f)a(f) \in \Af^0_{\mathrm{CAR}}, \qquad  f\in\fh.
\]
For normalized $f$, this is the projection onto occupation
of the one-particle mode $f$.

\subsection{Definition of the Hamiltonian and assumptions}

We first collect all the necessary hypotheses on the dispersion relation of the kinetic energy, on the interaction potential and on the external potential. Then we define the Hamiltonian. 

In the following let  $\po=-\i\nabla$ denote the vector of momentum operators in $\IR^d$ and write $\cs{x} := \sqrt{\abs x^2+1}$ for $x \in \IR^d$. 
\begin{hyp}[Dispersion relation]
\label{hyp:dispersion}
The function $\omega\:\IR^d\to[0,\infty)$ is measurable and even.
There are constants $C_\omega>0$ and
$0<\nu\leq\min\{2,d\}$ such that
\[
    \omega(p)\leq C_\omega\langle p\rangle^\nu,
    \qquad p\in\IR^d.
\]
Moreover, for every real-valued $\chi\in C_c^\infty(\IR^d)$,
the commutator initially defined on $C_c^\infty(\IR^d)$ extends to
the form domain of $\omega(\po)$ and satisfies
\begin{align}
    \norm{[\omega(\po),\chi]u}^2
    \leq
    C_\chi\bigl(
        \norm{\omega(\po)^{1/2} u}^2+\norm u^2
    \bigr),
    \qquad u\in\Def(\omega(\po)^{1/2}).
    \label{eq:dispersion est}
\end{align}
\end{hyp}

\begin{ex}[Examples of dispersion relations]
\label{rem:dispersion-examples}
\Cref{hyp:dispersion} holds for
$
\omega(p)=\langle p\rangle^\nu
$
whenever $0<\nu\leq\min\{2,d\}$, by standard
pseudodifferential commutator estimates, see e.g. the symbolic composition and commutator calculus for $S^m_{1,0}$ symbols in \cite{hormander}. Thus, the
relativistic dispersion $\omega(p)=\sqrt{\abs{p}^2+1}$ is covered in
every dimension $d\geq1$.  It also holds for the standard non-relativistic dispersion relation
$\omega(p)=p^2$ when $d\geq2$. 
\end{ex}

Let $W\:\IR^d\to\IR$ be the pair potential.  We use the following
basic hypothesis.

\begin{hyp}[Pair potential]
\label{hyp:pair-potential}
The potential $W$ is measurable, real-valued, and even, and it admits
a decomposition
\[
    W=W_2+W_\infty,
    \qquad
    W_2\in L^2(\IR^d),
    \qquad
    W_\infty\in L^\infty(\IR^d).
\]
Moreover, we assume $W$ to be bounded from below, i.e., there is a constant $C_W\geq0$ such that
\[
    W(x)\geq-C_W
    \qquad\text{for almost every }x\in\IR^d.
\]
\end{hyp}

\begin{ex}[Repulsive Riesz and Coulomb pair interactions]
	\label{cor:riesz-coulomb-main}
	Apart from bounded potentials, the hypothesis also allows for potentials with singularities.
	Let $d\geq1$ and
	$\kappa>0$, and consider
	\[
	W_\sigma(x) =\frac{\kappa}{\abs{x}^\sigma},
	\qquad 0<\sigma<\frac d2.
	\]
	Then $W_\sigma$ satisfies
	\cref{hyp:pair-potential}, since
	\[
	W_\sigma
	=
	W_\sigma\chr_{\{\abs x<1\}}
	+
	W_\sigma\chr_{\{\abs x\geq1\}}
	\]
	where the first term is in $L^2(\IR^d)$ precisely when $2\sigma<d$, while
	the second term is bounded.  
	In particular, the repulsive Coulomb potential
	$
	W_{1}(x)=\frac{\kappa}{\abs x}
	$
	in any $d\geq 3$ is covered.
\end{ex}

We also allow a real-valued external potential $U\:\IR^d\to\IR$.
Its negative part is assumed to be form-small with respect to
$\omega(\po)$.

\begin{hyp}[External potential]
\label{hyp:external-form}
One has $U\in L^1_{\mathrm{loc}}(\IR^d)$, and there are constants
$0\leq a<1$ and $b\geq0$ such that
\[
    \int_{\IR^d}U_-(x)\abs{u(x)}^2\,\d x
    \leq a \nn{\omega(\po)^{1/2}u}^2 +b\norm{u}^2,
    \qquad u\in C_c^\infty(\IR^d).
\]
\end{hyp}
Under \cref{hyp:dispersion,hyp:external-form}, the one-particle Hamiltonian
\[
\ho=\omega(\po)+U
\]
is, by the KLMN theorem, defined as a self-adjoint, lower-bounded form sum. 

 Additionally, for the proof one needs local regularity of the one-particle propagation near the two test modes. 
\begin{hyp}[Local propagation]
	\label{hyp:local-propagation}
	There are an open set $\Omega_U\subseteq\IR^d$ and a number
	$m>d/2$ such that, for every $u\in C_c^\infty(\Omega_U)$,
	\[
	\norm{e^{\i t\ho}u-u}_{H^m}\leq C_u\abs t,
	\qquad \abs t\leq1.
	\]
\end{hyp}
\begin{ex}[Smooth and Coulomb external fields]
	\label{cor:external-examples}
	\Cref{hyp:external-form,hyp:local-propagation} hold in each of
	the following cases.
	\begin{enumerate}[label=\upshape(\arabic*)]
		\item \label{it:external simple} \Cref{hyp:dispersion} holds and 
		$U\in W^{k,\infty}(\IR^d; \IR)$, where
		$
		k=\left\lfloor\frac d2\right\rfloor+1,
		$ with $\Omega_U = \IR^d$. In particular, they hold in the translation-invariant case $U \equiv 0$. 
		\item \label{it:external coulomb}  $d=3$, $\omega(p)=p^2$, and
		\[
		U(x)=U_0(x)+\sum_{\ell=1}^M
		\frac{\lambda_\ell}{\abs{x-R_\ell}},
		\qquad
		U_0\in W^{2,\infty}(\IR^3; \IR),
		\quad \lambda_\ell\in\IR,
		\]
			with
		$\Omega_U=\IR^3\setminus\{R_1,\ldots,R_M\}$.
	\end{enumerate}
\end{ex}
 On the $n$-particle space $\bigwedge^n\fh$, we define $H_n$ by the closed, lower-bounded quadratic form
\[
    \Ec_n[\Psi]
    = \nn{\dG(\omega(\po))^{1/2}\Psi }^2
       +\sum_{j=1}^n\int_{\IR^{dn}}U(x_j)\abs{\Psi(X)}^2\,\d X
      +\sum_{\substack{1\leq i,j\leq n\\i\neq j}}
       \int_{\IR^{dn}}W(x_i-x_j)\abs{\Psi(X)}^2\,\d X.
\]
The assumptions above
ensure that the form domain is dense and that $H_n$ is
self-adjoint and bounded from below. Note that the form domain $\Def(\Ec_n)$ is given as the intersection of the kinetic form domain $\Def(\dG(\omega(\po))^{1/2})$ with the multiplication-form domains of the positive parts of the potentials $U$ and $W$.  We then set
\[
    \Hb:=\bigoplus_{n=0}^{\infty}H_n.
\]
The direct sum is self-adjoint on its natural domain and hence defines
a strongly continuous unitary group on $\Fc$.  Its Heisenberg
dynamics on $\cB(\Fc)$ is
\[
    \tau_t(A):=e^{\i t\Hb}Ae^{-\i t\Hb},
    \qquad t\in\IR.
\]

\subsection{Main results}

Our main result is norm discontinuity for a fixed,
particle-number-preserving CAR observable. Thus, norm discontinuity already occurs in the
gauge-invariant CAR algebra.  The relevant time scale is
the inverse of the number of cluster particles used to test the operator
norm.
\begin{thm}[Discontinuity]
\label{thm:main-discontinuity}
Suppose that $\omega$ satisfies \cref{hyp:dispersion}. Let $W$ satisfy
\cref{hyp:pair-potential} and assume that it has a representative with two distinct nonzero
continuity points $z_1,z_2\in\IR^d\setminus\{0\}$ such that $W(z_1)\neq W(z_2)$.
 Let $U$ satisfy
\cref{hyp:external-form,hyp:local-propagation}, and assume that there is
a point $y_0\in\IR^d$ such that
$y_0+z_1\in\Omega_U$ and $y_0+z_2\in\Omega_U$.
Then there are normalized orthogonal functions
$f_1,f_2\in C_c^\infty(\Omega_U)$ and a number $\theta>0$  such that
\begin{align}
    \liminf_{n\to\infty}
    \norm{\tau_{t_n}(A)-A}>0, \qquad \text{for }~ t_n= \frac{\theta}{n}, ~A = a^*(f_1)a(f_2).
    \label{eq:main-transfer-discontinuity}
\end{align}
Moreover, there exist
$h\in C_c^\infty(\Omega_U)$ and a subsequence
$(t_{n_k})$ such that $\liminf_{k\to\infty}
\norm{\tau_{t_{n_k}}(A)-A}>0$ for $A=n_h$ and $A =a^*(h)$.

\end{thm}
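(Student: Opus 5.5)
We plan to test the operator norm on carefully built $(n+1)$-particle states: a ``cloud'' of $n$ fermions packed into a small ball $B_\delta(y_0)$, together with one extra particle in the mode $f_1$ or $f_2$, localized near $y_0+z_1$ resp.\ $y_0+z_2$. Concretely, we fix $\delta>0$ small so that $W$ is within $\eps$ of $W(z_i)$ on $B_{3\delta}(z_i)$ (continuity points), choose $f_i\in C_c^\infty(B_\delta(y_0+z_i))\subset C_c^\infty(\Omega_U)$ normalized with disjoint supports (hence orthogonal), and pick an orthonormal family $g_1,\dots,g_n\in C_c^\infty(B_\delta(y_0))$ of functions with kinetic energy $\langle g_k,\omega(\po)g_k\rangle$ bounded by something like $C n^{\nu/d}$ on average. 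Since $\nu\le d$, the total kinetic energy of the cloud is at most $C n^{1+\nu/d}\le Cn^2$. Set $\Phi_n:=a^*(g_1)\cdots a^*(g_n)\Omega$. The heuristic is that on $a^*(f_i)\Phi_n$ the extra particle feels the potential $2nW(z_i)+O(n\eps)$ from the cloud (the factor 2 comes from the $i\ne j$ double sum). So $A=a^*(f_1)a(f_2)$ should pick up the relative phase $e^{2\i t n(W(z_1)-W(z_2))}$, and with $t_n=\theta/n$ and $\theta$ chosen so that $|e^{2\i\theta(W(z_1)-W(z_2))}-1|\ge 1$ we get a discrepancy of order one.

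The core estimate is therefore the following. Let $\Psi_i:=a^*(f_i)\Phi_n$ and $E_n$ be an energy reference for the cloud. We want
\[
\bigl\|e^{-\i t_n \Hb}\Psi_i-e^{-\i t_n(E_n+2nW(z_i))}\Psi_i\bigr\|\le C\eps\theta+o(1).
\]
We will prove this with a Duhamel / energy-form argument on $\bigwedge^{n+1}\fh$. We write $H_{n+1}=H_n^{\mathrm{cloud}}\otimes 1+h\otimes1+2\sum_j W(x-x_j)$, and we will not try to follow the cloud dynamics exactly. Instead, we compare $\langle\Psi_2,\tau_t(A)\Psi_1\rangle$ with $\langle\Psi_2,A\Psi_1\rangle=1$. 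It suffices to show that over the time $t_n=\theta/n$ (a) the cloud stays essentially in $B_{2\delta}(y_0)$ and the test particle stays near $y_0+z_i$, and (b) the test particle's motion is negligible. Point (b) follows from \cref{hyp:local-propagation}: $\|e^{\i th}f_i-f_i\|_{H^m}\le C|t|\to0$. The $H^m$ control with $m>d/2$ gives sup-norm control, which is used to handle the singular part $W_2$ via $\|W_2*|u|^2\|_\infty$-type bounds.

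For (a), we use a localization commutator. Take $\chi\in C_c^\infty$ equal to 1 on $B_\delta(y_0)$ and supported in $B_{2\delta}(y_0)$, and consider $\langle \Nf_{\chi}\rangle_t$, where $\Nf_\chi=\dG(\chi)$. Its time derivative is $\i\langle[\Hb,\dG(\chi)]\rangle=\i\langle \dG([\omega(\po)+U,\chi])\rangle$, since the potentials commute with $\chi$. Using \eqref{eq:dispersion est} and Cauchy--Schwarz summed over particles, this is bounded by $C\sqrt n\,\langle \dG(\omega)+n\rangle^{1/2}$. Energy conservation, together with lower boundedness of $W$ (\cref{hyp:pair-potential}) and the KLMN bound on $U_-$, bounds the kinetic energy by $C(\Ec_n[\Phi_n]+n^2)\le Cn^2$. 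The derivative is then $O(n^{3/2})$, which integrated over $t_n=\theta/n$ gives $O(\theta\sqrt n)$ particles escaping out of $n$. So a fraction $O(\theta/\sqrt n)\to0$ leaves. Here we must also bound the initial interaction energy: $\sum W(x_i-x_j)$ on $\Phi_n$ is at most $n^2$ times $\sup$-type bounds, and the singular $W_2$ part is controlled via $\|g_k\|_\infty$. This forces a careful choice of the $g_k$, for instance wavepackets on a sublattice with growing frequencies, so that $\langle\Phi_n,\sum W_2\,\Phi_n\rangle\le Cn^2$. With the cloud localized, the effective potential on the test particle is $2\sum_j W(x-x_j)=2nW(z_i)+O(n\eps)+{}$(escaped part). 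The escaped particles contribute $o(n)$ only if $W$ is bounded there, which holds away from the singularity since the test particle is at distance $\approx|z_i|$. A Duhamel comparison with the Hamiltonian in which the test-particle interaction is replaced by the constant $2nW(z_i)$ then gives the estimate above, with error $\int_0^{t_n}$ of an $O(n\eps+o(n))$ quantity, i.e.\ $O(\theta\eps)+o(1)$.

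\textbf{Main obstacle.} The hardest step is exactly this Duhamel comparison. The interaction difference is not bounded in norm, so we must estimate it in expectation along the true evolution. We plan to use the localization of (a) combined with Cauchy--Schwarz on $\|(V-2nW(z_i))e^{-\i s\Hb}\Psi_i\|$ restricted to the good region, and to absorb the bad region using the $O(\theta/\sqrt n)$ escape bound together with the lower bound on $W$.

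\textbf{Remaining claims.} For $n_h$ and $a^*(h)$, we take $h=(f_1+f_2)/\sqrt2$. Then $n_h$ contains the cross term $\tfrac12 a^*(f_1)a(f_2)$, whose phase $e^{2\i\theta_n(W(z_1)-W(z_2))}$ rotates. For $a^*(h)$, acting on $\Phi_n$ produces a superposition whose relative phase shifts. In either case the absolute cloud phase $e^{-\i t E_n}$ must be handled, and that is why a subsequence $n_k$ is needed, along which the relevant phases $e^{-\i t_n(\cdot)}$ converge to values giving a nonzero limit.
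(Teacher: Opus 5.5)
Your architecture matches the paper's: a packed Slater cloud with energy $O(n^2)$, test modes near $y_0+z_1$ and $y_0+z_2$, a Duhamel comparison on the time scale $\theta/n$, and the matrix element of $a^*(f_1)a(f_2)$ between the two $(n+1)$-particle states. The gap is that your localization estimate is too weak to close the argument. Your bound $\abs{\tfrac{\d}{\d s}\langle\dG(\chi)\rangle}\le C\sqrt n\,(\langle\dG(\omega(\po))\rangle+n)^{1/2}=O(n^{3/2})$ only shows that $O(\theta\sqrt n)$ particles have escaped by time $\theta/n$. A vanishing \emph{fraction} is not the smallness you need. The errors are governed by the absolute number of cloud particles reaching the test regions $A_i\supset\supp f_i$. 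After the Duhamel step you must show $\langle a^*(f_{1,t})\Phi_n(t),a^*(f_1)a(f_2)a^*(f_{2,t})\Phi_n(t)\rangle\approx1$, where $\Phi_n(t)=e^{-\i tH_n}\Phi_n$. This requires $\langle n_{f_1}+n_{f_2}\rangle\le\langle\Nf_{A_1\cup A_2}\rangle$ in $\Phi_n(t)$ to be small compared with the signal $\sim\abs{W(z_1)-W(z_2)}\theta$, and a bound $C\theta\sqrt n$ says nothing here. Likewise, the antisymmetrization correction (see below) contributes a term of size $n\langle\Nf_A\rangle_{n,s}^{1/2}$ to the Duhamel integrand. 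With $\langle\Nf_A\rangle_{n,s}\lesssim n^{3/2}s$ this integrates over $[0,\theta/n]$ to order $\theta^{3/2}n^{1/4}\to\infty$.

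The missing idea is a \emph{quadratic} leakage bound. Write $\langle\cdot\rangle_{n,s}$ for the expectation in $\Phi_n(s)$. Take $\chi$ vanishing on a neighbourhood of the initial support and equal to $1$ off $B$, with $\chi^2+\rho^2=1$. Then $\tfrac{\d}{\d s}\langle\dG(\chi^2)\rangle_{n,s}=2\Re\sum_j\langle\chi_j\Phi_n(s),\i[\omega(\po),\chi]_j\Phi_n(s)\rangle$. Cauchy--Schwarz in $j$, the commutator bound of \cref{hyp:dispersion} and the energy bound give $\abs{\tfrac{\d}{\d s}\langle\dG(\chi^2)\rangle_{n,s}}\le Cn\,\langle\dG(\chi^2)\rangle_{n,s}^{1/2}$. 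Since $\langle\dG(\chi^2)\rangle_{n,0}=0$, this yields $\langle\Nf_{B^c}\rangle_{n,s}\le Cn^2s^2$. So only $O(\theta^2)$ particles escape, which is of lower order than the signal for small $\theta$.

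Two structural points also need fixing. First, the splitting $\ho\otimes\Id+\Id\otimes H_n+2\sum_jW(x-y_j)$ exists only on $\fh\otimes\bigwedge^n\fh$, where the added particle is distinguishable. The Duhamel error must be carried to $\bigwedge^{n+1}\fh$ by $J_n=\sqrt{n+1}P_-^{n+1}$, with $J_n(f\otimes\Psi)=a^*(f)\Psi$. The naive bound $\norm{J_n}\le\sqrt{n+1}$ turns your main error $2\eps nt$ into $2\eps\theta\sqrt n$. You need two facts:
\begin{itemize}
\item $J_n$ is isometric on $L^2(A)\otimes\bigwedge^nL^2(B)$ for disjoint $A,B$;
\item $\norm{J_n\Psi}^2\le\langle\Psi,(\Id\otimes(\Nf_A+\Id))\Psi\rangle$ on $L^2(A)\otimes\bigwedge^n\fh$, which handles the leaked part.
\end{itemize}
Second, your core estimate $e^{-\i t_n\Hb}\Psi_i\approx e^{-\i t_n(E_n+2nW(z_i))}\Psi_i$ is false in general. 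The cloud is not an approximate eigenstate of $H_n$ on the time scale $1/n$: for $\omega(p)=p^2$, $d\in\{2,3\}$ and orbitals of width $\sim n^{-1/d}$, the kinetic energy alone has spread of order $n^{1/2+2/d}\gg n$. The correct comparison state is $e^{-2\i nW(z_i)t}a^*(e^{-\i t\ho}f_i)\Phi_n(t)$. The cloud evolution is the same for $i=1,2$ and cancels in the matrix element once leakage into $A_1\cup A_2$ is controlled quadratically. So no absolute cloud phase ever needs handling. The subsequence in the last claim comes only from polarization, $a^*(f_1)a(f_2)=\tfrac12\sum_{\ell=0}^3\i^\ell n_{h_\ell}$, combined with $\norm{\tau_t(n_h)-n_h}\le2\norm{\tau_t(a^*(h))-a^*(h)}$.

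Minor points:
\begin{itemize}
\item $A\Psi_1=0$; the nontrivial element is $\langle\Psi_1,A\Psi_2\rangle=1$.
\item Escaped particles may come close to the test particle. The singular part of $W$ must therefore be handled by the translation-uniform bound $\norm{W(\cdot-y)v}\le K\norm{v}_{H^m}$, not by distance.
\item For unbounded $W$, the Duhamel formula must be justified, e.g.\ via the truncations $\min\{W,R\}$ with $R$-independent constants and strong resolvent convergence.
\end{itemize}
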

The failure of norm continuity implies a failure of invariance.
\begin{cor}[Non-invariance of the CAR algebras]
	\label{thm:main-noninvariance}
	Under the assumptions of \cref{thm:main-discontinuity}, neither
	$\Af^0_{\mathrm{CAR}}$ nor $\Af_{\mathrm{CAR}}$ is invariant under
	$\tau_t$. More precisely, for any $A$ chosen as in
	\cref{thm:main-discontinuity},
	\[
	\tau_t(A)\notin\Af_{\mathrm{CAR}}
	\qquad\text{for almost every }t\in\IR .
	\]
\end{cor}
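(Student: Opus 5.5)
The plan is to argue by contradiction, combining the norm discontinuity from \cref{thm:main-discontinuity} with a measurability argument (Lusin plus Steinhaus) that uses the separability of $\Af_{\mathrm{CAR}}$. Fix $A$ as in \cref{thm:main-discontinuity}, i.e.\ $A=a^*(f_1)a(f_2)$, $A=n_h$ or $A=a^*(h)$. In each case \cref{thm:main-discontinuity} gives a sequence $s_k\to0$ and $\delta>0$ with $\norm{\tau_{s_k}(A)-A}\geq\delta$ for all $k$. Let
\[
S:=\{t\in\IR:\tau_t(A)\in\Af_{\mathrm{CAR}}\},
\]
and assume, to reach a contradiction, that $S$ has positive (outer) Lebesgue measure. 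Since $A\in\Af^0_{\mathrm{CAR}}$ in the first two cases and $A\in\Af_{\mathrm{CAR}}$ in the third, the conclusion $\abs{S}=0$ immediately gives that neither algebra is $\tau_t$-invariant for almost every $t$.

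\emph{Step 1: measurability.} Since $e^{\i t\Hb}$ is strongly continuous, $t\mapsto\langle\psi,(\tau_t(A)-B)\varphi\rangle$ is continuous for all $\psi,\varphi\in\Fc$ and $B\in\cB(\Fc)$. Taking the supremum over countable dense sets of unit vectors, $t\mapsto\norm{\tau_t(A)-B}$ is lower semicontinuous, hence Borel.

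$\Af_{\mathrm{CAR}}$ is separable because $\fh=L^2(\IR^d)$ is separable, so it has a countable norm-dense subset $\{B_k\}$. Then $\operatorname{dist}(\tau_t(A),\Af_{\mathrm{CAR}})=\inf_k\norm{\tau_t(A)-B_k}$ is Borel, and $S$, its zero set, is Borel (closedness of $\Af_{\mathrm{CAR}}$ is used here).

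Moreover, the map $F\colon S\to\Af_{\mathrm{CAR}}$, $F(t)=\tau_t(A)$, is Borel for the norm topology. Preimages of open balls $\{t:\norm{\tau_t(A)-B_k}<r\}$ are Borel, and by separability every norm-open set is a countable union of such balls.

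\emph{Step 2: Lusin and Steinhaus.} Since $F$ is a Borel map from a set of positive measure into a separable metric space, Lusin's theorem yields a compact $K\subseteq S$ with $\abs K>0$ on which $F$ is norm continuous, hence uniformly continuous. Choose $\eta>0$ such that $t,t'\in K$ and $\abs{t-t'}<\eta$ imply $\norm{\tau_t(A)-\tau_{t'}(A)}<\delta$. By Steinhaus' theorem, $K-K$ contains an interval $(-\eta',\eta')$, and we may take $\eta'\leq\eta$.

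For $k$ large we have $\abs{s_k}<\eta'$, so there is $t\in K$ with $t+s_k\in K$. Since $\tau_t$ is implemented by a unitary, it is isometric, and therefore
\[
\norm{\tau_{s_k}(A)-A}=\norm{\tau_t\bigl(\tau_{s_k}(A)-A\bigr)}=\norm{\tau_{t+s_k}(A)-\tau_t(A)}<\delta .
\]
This contradicts the choice of $(s_k)$, so $\abs S=0$.

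\emph{Expected obstacle.} The only delicate point is the measure-theoretic input. One needs $S$ to be measurable, which is handled by the distance-function description above. One also needs $F$ to be genuinely Borel for the \emph{norm} topology, even though the dynamics is only strongly continuous. The lower-semicontinuity argument combined with separability of $\Af_{\mathrm{CAR}}$ is what resolves this, and it is precisely where separability of the CAR algebra over $L^2$ enters. The argument would break down for a nonseparable target such as $\cB(\Fc)$.
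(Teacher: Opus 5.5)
Your proof is correct, but it takes a different route from the paper in the measure-theoretic core.

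\textbf{The paper's route.} It isolates a general automatic-continuity lemma, a variant of Hille--Phillips Thm.~10.2.3. Measurability of $E=\{s:\tau_s(A)\in\mathfrak X\}$ and of $g(s)=\chr_E(s)\tau_s(A)$ is obtained exactly as you do, via lower semicontinuity plus separability. The paper then invokes Pettis's theorem to get $g\in L^1(\IR;\mathfrak X)$ as a Bochner function. It concludes from continuity of translations in $L^1$, through
\[
\abs{E\cap(E-h)}\,\norm{\tau_h(A)-A}\le\int\norm{g(s+h)-g(s)}\,\d s\to0,
\]
together with $\abs{E\cap(E-h)}\to\abs E>0$.

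\textbf{Your route.} You replace the Bochner-integral step with Lusin's theorem for Borel maps into a separable metric space. This gives a compact $K\subseteq S$ of positive measure on which the orbit is uniformly continuous. Steinhaus's theorem then supplies $t,t+s_k\in K$, and unitary invariance of the norm transports the smallness back to $\norm{\tau_{s_k}(A)-A}$.

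\textbf{Comparison.} Both arguments use separability at exactly the same point: norm-Borel measurability of the orbit. They are close cousins, since Steinhaus's theorem is itself usually proved from continuity of $h\mapsto\abs{K\cap(K-h)}$, i.e.\ translation continuity in scalar $L^1$. Your version avoids Pettis's theorem and vector-valued integration and is arguably more elementary. The paper's version packages the argument as a reusable lemma with a quantitative estimate and the full conclusion that $t\mapsto\tau_t(A)$ is norm continuous. Your argument yields this as well if you keep $K$ fixed and let $\delta>0$ be arbitrary.

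One minor point: when applying Lusin, first restrict $S$ to a bounded Borel piece of positive measure, or extend $F$ by $0$ as the paper does with $g$, so that the standard finite-measure form of the theorem applies.
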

At this point the special case of constant interaction potentials $W \equiv c$ should also be mentioned. Here the situation is very simple and does not require the machinery needed in the general case.
\begin{prop}[Constant interactions]
	\label{prop:constant-main}
	Suppose $W\equiv c\neq0$, and let $h$ be any self-adjoint
	one-particle Hamiltonian. 
	Then, for every normalized $f\in\fh$, there is a sequence $t_n\to0$
	such that
	\[
	\lim_{n\to\infty}
 \norm{\tau_{t_n}(a^*(f))-a^*(f)}=2.
	\]
	Therefore, the ordinary CAR algebra is not invariant with non-invariance occurring for almost every time $t\in \IR$. 
	
	On the other hand, the gauge-invariant CAR algebra is invariant, and the restriction of the dynamics to this algebra coincides with the free Bogoliubov dynamics, which is pointwise-norm continuous. More precisely, \[ \tau_t(A) = e^{\i t\dG(h)}Ae^{-\i t\dG(h)}, \qquad \text{ for all } A\in\Af^0_{\mathrm{CAR}}. \]
\end{prop}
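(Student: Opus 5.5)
With $W\equiv c$, the interaction term on the $n$-particle sector is $\sum_{i\neq j}c=c\,n(n-1)$, with no dependence on positions. Hence $H_n=\dG(h)_n+c\,n(n-1)$, understood as a form sum. Since this is a constant shift on each sector, the plan is to show $\Hb=\dG(h)+c\,\Nf(\Nf-1)$. Both summands are functions of commuting self-adjoint operators: $\Nf$ commutes with $\dG(h)$ because the second quantization preserves every sector. It follows that
\[
e^{\i t\Hb}=e^{\i t\dG(h)}e^{\i tc\Nf(\Nf-1)}.
\]
We then write $\tau_t=\alpha_t\circ\beta_t$, where
\[
\alpha_t(A)=e^{\i t\dG(h)}Ae^{-\i t\dG(h)},\qquad
\beta_t(A)=e^{\i tc\Nf(\Nf-1)}Ae^{-\i tc\Nf(\Nf-1)}.
\]
The two automorphism groups commute.

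\textbf{The gauge-invariant part.} For $A\in\Af^0_{\mathrm{CAR}}$, the operator $A$ commutes with $e^{\i\vartheta\Nf}$ for every $\vartheta$. It therefore commutes with the spectral projections of $\Nf$, i.e.\ it preserves each sector. Consequently $A$ commutes with every bounded function of $\Nf$, so $\beta_t(A)=A$ and $\tau_t(A)=\alpha_t(A)$. This is the free Bogoliubov dynamics, which maps $\Af_{\mathrm{CAR}}$ into itself via $\alpha_t(a(f))=a(e^{\i th}f)$. It also commutes with $e^{\i\vartheta\Nf}$, so it preserves $\Af^0_{\mathrm{CAR}}$. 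Pointwise-norm continuity follows from the strong continuity of $e^{\i th}$: we use $\norm{a(f)-a(g)}=\norm{f-g}$ on generators, then a density and $\varepsilon/3$ argument together with $\norm{\alpha_t}=1$.

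\textbf{Discontinuity for $a^*(f)$.} The operator $a^*(f)$ maps the $n$-sector to the $(n+1)$-sector. Since $\Nf a^*(f)=a^*(f)(\Nf+1)$, we get
\[
\beta_t(a^*(f))=a^*(f)\,e^{\i tc[(\Nf+1)\Nf-\Nf(\Nf-1)]}=a^*(f)e^{2\i tc\Nf}.
\]
Hence $\tau_t(a^*(f))-a^*(f)=\alpha_t\bigl(a^*(f)(e^{2\i tc\Nf}-1)\bigr)+\bigl(a^*(e^{\i th}f)-a^*(f)\bigr)$. The second term tends to $0$ in norm. Because $\alpha_t$ is isometric, it suffices to show that $\norm{a^*(f)(e^{2\i tc\Nf}-1)}\to2$ along a sequence $t_n\to0$. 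We restrict to the $n$-particle sector and take the $n$-particle Slater determinant $\Psi_n$ of orthonormal functions $g_1,\dots,g_n$ orthogonal to $f$. Then $\norm{a^*(f)\Psi_n}=1$. Choosing $t_n=\pi/(2cn)$ gives $e^{2\i t_nc n}=-1$, so the vector $a^*(f)(e^{2\i t_nc\Nf}-1)\Psi_n$ has norm $2$. The upper bound $2$ is clear, so the norm equals exactly $2$ for every $n$, and the limit is $2$.

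\textbf{Non-invariance for almost every time.} This step is the main obstacle: it requires the automatic-continuity argument, presumably the same one the paper uses for \cref{thm:main-noninvariance}. We consider $S=\{t:\tau_t(a^*(f))\in\Af_{\mathrm{CAR}}\}$; we would apply this to a generator, or use that $\tau_t(\Af_{\mathrm{CAR}})\subseteq\Af_{\mathrm{CAR}}$ fails. The set $S$ is measurable, because $t\mapsto\tau_t(a^*(f))$ is strongly continuous and $\Af_{\mathrm{CAR}}$ is separable. Suppose $S$ had positive measure. We then combine the group property on $S$ with a Steinhaus-type argument to obtain invariance on a neighbourhood of $0$, and weak measurability together with separability (a Pettis-type automatic continuity for measurable homomorphisms into a separable Banach space) to force norm continuity at $0$. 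This contradicts the computation above. Concretely, I would first try the following route. Because the free part $\alpha_t$ is harmless, $\tau_t(a^*(f))\in\Af_{\mathrm{CAR}}$ if and only if $a^*(f)e^{2\i tc\Nf}\in\Af_{\mathrm{CAR}}$. One can then check directly, for every $t$ with $2tc\notin\pi\mathbb{Z}$, that $e^{2\i tc\Nf}$ restricted appropriately is not approximable within $\Af_{\mathrm{CAR}}$. This would give non-invariance for all such $t$, which is in particular almost every $t$. If that direct check fails, I would fall back on the general automatic-continuity lemma.
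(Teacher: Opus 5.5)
Your computation $\beta_t(a^*(f))=a^*(f)e^{2\i tc\Nf}$, the value $2$ along $t_n=\pi/(2cn)$, and your treatment of $\Af^0_{\mathrm{CAR}}$ are correct and essentially follow the paper. The one difference is that the paper chooses $\Psi_n$ orthogonal to both $f$ and $e^{\i t_nh}f$ and evaluates $\norm{a^*(e^{\i t_nh}f+f)\Psi_n}$ directly, whereas you split off $\alpha_t$ using that it is isometric; both are valid.

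The almost-everywhere non-invariance, however, is not proved in your proposal. Your automatic-continuity sketch relies on a group property of $S=\{t:\tau_t(a^*(f))\in\Af_{\mathrm{CAR}}\}$ that you never establish. In general, closing $S$ under differences would require $\tau_{-t}(\Af_{\mathrm{CAR}})\subseteq\Af_{\mathrm{CAR}}$, which is exactly what is in question. The paper's proof is your fallback, and its lemma needs no group structure. Take a bounded $E\subseteq S$ with $\abs E>0$ and set $g(s)=\chr_E(s)\tau_s(A)$. This $g$ is Bochner integrable with values in the separable space $\Af_{\mathrm{CAR}}$, by Pettis's theorem. Since each $\tau_s$ is isometric, $\norm{g(s+h)-g(s)}=\norm{\tau_h(A)-A}$ for $s\in E\cap(E-h)$. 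Hence $\abs{E\cap(E-h)}\,\norm{\tau_h(A)-A}\leq\norm{g(\cdot+h)-g}_{L^1}\to0$, which contradicts $\norm{\tau_{t_n}(a^*(f))-a^*(f)}\to2$.

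Your direct route can also be completed, and it gives more.
\begin{itemize}
\item For $X=a^*(f)e^{\i\phi\Nf}$ one has $a(f)X+e^{\i\phi}Xa(f)=(1-n_f)e^{\i\phi\Nf}+n_fe^{\i\phi\Nf}=e^{\i\phi\Nf}$. So $X\in\Af_{\mathrm{CAR}}$ if and only if $e^{\i\phi\Nf}\in\Af_{\mathrm{CAR}}$. In particular, here $S=\{t:e^{2\i tc\Nf}\in\Af_{\mathrm{CAR}}\}$ is in fact a subgroup.
\item Let $B$ lie in the CAR algebra of finitely many modes, and let $g$ be a unit vector orthogonal to them. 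Splitting $B$ into even and odd parts gives $\langle a^*(g)\Omega,Ba^*(g)\Omega\rangle=\langle\Omega,B\Omega\rangle$ for the vacuum $\Omega$. The same two matrix elements of $e^{\i\phi\Nf}$ are $e^{\i\phi}$ and $1$. Therefore $\abs{e^{\i\phi}-1}\leq2\norm{e^{\i\phi\Nf}-B}$, and by density $e^{\i\phi\Nf}\notin\Af_{\mathrm{CAR}}$ for $\phi\notin2\pi\mathbb{Z}$.
\item Hence $\tau_t(a^*(f))\notin\Af_{\mathrm{CAR}}$ whenever $tc\notin\pi\mathbb{Z}$. Excluding all $t$ with $2tc\in\pi\mathbb{Z}$, as you do, is more than needed: at odd multiples of $\pi$ the factor is the parity operator, which is not in $\Af_{\mathrm{CAR}}$.
\item Conversely, for $tc\in\pi\mathbb{Z}$ one has $e^{\i tc\Nf(\Nf-1)}=\Id$, so $\tau_t=\alpha_t$ preserves $\Af_{\mathrm{CAR}}$.
\end{itemize}
This identifies the exceptional times exactly and avoids measure theory. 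It does, however, exploit the special structure of $W\equiv c$, whereas the paper's lemma is what carries over to \cref{thm:main-noninvariance}.
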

We can summarize the previous results as a  dichotomy.  
\begin{rem}
	\label{rem:invariance-dichotomy}
	Assume $U=0$, let $\omega$ satisfy
	\cref{hyp:dispersion}, and let $W$ satisfy
	\cref{hyp:pair-potential}.  Suppose, in addition, that $W$ has a
	representative which is continuous on
	$\IR^d\setminus\{0\}$.  Then
	\begin{align*}
	\tau_t\bigl(\Af^0_{\mathrm{CAR}}\bigr)
	=
	\Af^0_{\mathrm{CAR}}
	\quad\text{for every }t\in\IR &\Longleftrightarrow W \text{ is constant almost everywhere}, \\
	\tau_t(\Af_{\mathrm{CAR}})
	=
	\Af_{\mathrm{CAR}}
	\quad\text{for every }t\in\IR &\Longleftrightarrow W=0 \text{ almost everywhere.} 
\end{align*}
	In fact,  if $W=c$ almost everywhere, then \Cref{prop:constant-main} implies invariance of $\Af^0_{\mathrm{CAR}}$.
	 If $W$ is not constant almost everywhere, continuity away
	from the origin yields two distinct nonzero points $z_1,z_2$ such
	that $W(z_1)\neq W(z_2)$.  Hence
	\cref{thm:main-noninvariance} shows that neither CAR algebra is
	invariant.  
 If $c=0$,
	the full CAR algebra is trivially invariant, whereas for $c\neq0$
	\cref{prop:constant-main} shows that the full CAR algebra is not
	invariant.
	
The continuity assumption in this remark is used only to ensure
that every potential which is not constant almost everywhere
has two nonzero continuity points at which its values differ, a requirement for \Cref{thm:main-discontinuity}.
\end{rem}

\section{Proofs}
\label{sec:proof}

The mechanism behind the proof is a relative-phase argument.  We first
construct a dense particle cluster in a  bounded region while its kinetic and
interaction energies remain of order at most $n^2$.    Two additional
one-particle modes are placed at different relative positions
$z_1,z_2$ with $W(z_1)\neq W(z_2)$.
On the time scale $t\sim n^{-1}$, their interactions with the same
$n$-particle cluster generate the two distinct phases
$e^{-2\i nW(z_1)t}$ and $e^{-2\i nW(z_2)t}$.

The resulting relative phase remains of order one as $n\to\infty$.
A quadratic leakage estimate shows that the cluster remains localized on
this time scale, while a localized antisymmetrization estimate avoids the factor $\sqrt n$ arising from the norm of the
full antisymmetrization map.  These bounds allow the relative phase to
be detected in a fixed CAR observable.

\subsection{Packing fermionic states and the leakage estimate}

The cluster construction is an elementary localized version of the standard
Slater determinant constructions with kinetic energy upper
bounds, see e.g.
\cite[eq. (3.46)]{LewinLiebSeiringer2023}. 

For technical reasons, we introduce the truncations
\[
W^{(R)}(x):=\min\{W(x),R\},
\qquad R>0,
\]
which are bounded by \cref{hyp:pair-potential}. We denote the corresponding quadratic forms and
Hamiltonians by $\Ec_n^{(R)}$ and $H_n^{(R)}$, respectively. Further, for two sets $A,B$ we write $A\Subset B$ if $\overline A$ is compact and $\overline A \subset B$. 
\begin{lem}[Packing Slater determinants]
	\label{lem:packed-states}
	Assume \cref{hyp:dispersion}.  Let $B_0\subset\IR^d$ be a nonempty bounded open
	set, and suppose that
	$W\in L^2(\IR^d)+L^\infty(\IR^d)$
	and $U\in L^1_{\mathrm{loc}}(\IR^d)$. 
	There exists $C>0$ such that, for every $n\in\NN$, one can find
	orthonormal functions
	$\varphi_{1,n},\ldots,\varphi_{n,n}\in C_c^\infty(B_0)$
	for which the normalized Slater determinant
	$\Psi_n=\varphi_{1,n}\wedge\cdots\wedge\varphi_{n,n}$
	satisfies
	\begin{equation*}
		\Ec^{(R)}_{n}[\Psi_n]\leq \Ec_n[\Psi_n]\leq C n^2 \quad \text{ for all } R > 0.
	\end{equation*}
\end{lem}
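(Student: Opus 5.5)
The plan is to build $\Psi_n$ from a cube-partition construction. I would not try to take plane waves in one box, since then all $n$ particles would fill one box and the $L^2$ part of $W$ becomes awkward. Instead I fix a closed cube $Q\subset B_0$ and a real $\eta\in C_c^\infty(B_0)$ with $\eta\equiv1$ on $Q$. As one-particle functions I take the $n$ lowest Dirichlet-type modes of $Q$, smoothed so that they stay in $C_c^\infty(B_0)$. Concretely, I rescale $Q=[0,L]^d$ and take products of $\sin$-type bumps built from a fixed compactly supported profile. For $k\in\{1,\dots,K\}^d$ set $\varphi_k(x)=\prod_{i}g(x_i)\sin(\pi k_i x_i/L)$, normalized, with $g\in C_c^\infty((0,L))$. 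These are not orthonormal in general. Cleaner is to subdivide $Q$ into $n$ disjoint subcubes of side $\sim n^{-1/d}$ and put in each a rescaled copy of a fixed bump $\phi\in C_c^\infty((0,1)^d)$. Disjoint supports give orthonormality for free, with $\varphi_{j,n}(x)=\ell^{-d/2}\phi((x-c_j)/\ell)$ and $\ell\sim n^{-1/d}$. Then everything is a direct estimate, with $\Ec_n[\Psi_n]$ split into kinetic, $U$, and $W$ parts.

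\emph{Kinetic term.} For a Slater determinant, $\nn{\dG(\omega(\po))^{1/2}\Psi_n}^2=\sum_j\langle\varphi_{j},\omega(\po)\varphi_j\rangle$. The bound $\omega\le C_\omega\cs{p}^\nu\le C_\omega\cs{p}^2$ gives $\langle\varphi_j,\omega(\po)\varphi_j\rangle\le C_\omega(1+\norm{\nabla\varphi_j}^2)\lesssim \ell^{-2}=n^{2/d}$. Summing yields $n^{1+2/d}\le n^2$ for $d\ge 2$. For $d=1$ I use $\nu\le \min\{2,d\}=1$ instead, so $\cs{p}^\nu\le\cs p$, and each term is at most $\lesssim\ell^{-1}=n$, giving $n^2$. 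In general the bound is $n\cdot n^{\nu/d}\le n^2$. Here $\langle\varphi,\cs{\po}^\nu\varphi\rangle$ is controlled by interpolation, $\le\norm{\varphi}^{2-\nu}\,\norm{\cs{\po}\varphi}^{\nu}$ via Hölder in Fourier space.

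\emph{External term.} $\sum_j\int U|\varphi_j|^2\le \norm{\phi}_\infty^2\ell^{-d}\int_{Q}|U| \lesssim n\cdot n=n^2$, using $U\in L^1_{\rm loc}$. This step explains the disjoint-subcube choice: the density $\sum_j|\varphi_j|^2\le C n$ pointwise on $Q$.

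\emph{Interaction term.} I would bound $\int W(x_i-x_j)|\Psi|^2$ by the two-body density. For a Slater determinant, $\rho^{(2)}(x,y)\le\rho(x)\rho(y)$ with $\rho=\sum|\varphi_j|^2\le Cn\chr_Q$. Using $W\le W_++$, and since we need only an upper bound, I replace $W$ by $|W_2|+\norm{W_\infty}_\infty$ to get
\[
\sum_{i\neq j}\int W(x_i-x_j)|\Psi_n|^2\le \iint \bigl(|W_2|(x-y)+\norm{W_\infty}_\infty\bigr)\rho(x)\rho(y)\,\d x\,\d y .
\]
The bounded part contributes $\le\norm{W_\infty}_\infty n^2$. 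For the $L^2$ part, Young/Cauchy–Schwarz gives $\iint|W_2|(x-y)\rho(x)\rho(y)\le\norm{W_2}_2\norm{\rho*\rho}_2\le \norm{W_2}_2\norm{\rho}_1\norm{\rho}_2$. Here $\norm{\rho}_1=n$ and $\norm{\rho}_2\le Cn|Q|^{1/2}$, which would give $n^2$. The $U$ and $W$ contributions are finite (so $\Psi_n$ is in the form domain) by the same bounds applied to positive parts. Finally, $\Ec^{(R)}_n\le\Ec_n$ holds because $W^{(R)}\le W$ pointwise.

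The main obstacle is getting the kinetic bound uniformly for $\nu\le\min\{2,d\}$, especially $d=1$. I would handle it by the Fourier interpolation above, which relies exactly on the scaling $\ell^{-\nu}$ per particle and $\nu\le d$.
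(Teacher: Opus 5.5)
Your final construction (rescaled copies of a fixed bump in $n$ disjoint subcubes of side $\sim n^{-1/d}$, the scaling bound $n^{1+\nu/d}\le n^2$ for the kinetic energy, $\norm{\rho}_\infty\lesssim n$, and $\rho^{(2)}\le\rho\otimes\rho$ for the pair term) is correct and is essentially the paper's proof; the only difference is cosmetic, namely that you handle the $L^2$ part of $W$ by Cauchy--Schwarz and Young instead of using $\sup_{x\in Q}\int_Q\abs{W(x-y)}\,\d y<\infty$. One small slip is that your displayed external-potential bound $\norm{\phi}_\infty^2\ell^{-d}\int_Q\abs{U}$ is already $O(n)$ rather than $n\cdot n$, though the weaker $O(n^2)$ claim of course suffices.
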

\begin{proof}
		Choose an open cube $Q\Subset B_0$ of side length $0 < L\leq1$, set
	$m_n=\left\lceil n^{1/d}\right\rceil$, 
	and divide $Q$ into $m_n^d$ mutually disjoint subcubes of side
	length $\ell_n:=L/m_n$.  In $n$ distinct subcubes place
	appropriately rescaled copies of a fixed normalized function
	$\varphi\in C_c^\infty((0,1)^d)$.  The resulting functions
	$\varphi_{1,n},\ldots,\varphi_{n,n}$ have disjoint supports and are
	therefore orthonormal. By scaling and
	\Cref{hyp:dispersion}, we get for all $j$,
	\[
	\langle\varphi_{j,n},\omega(\po)\varphi_{j,n}\rangle
	=\int_{\IR^d}\omega(p/\ell_n)\abs{\widehat\varphi(p)}^2\,\d p
	\leq
	C\ell_n^{-\nu}
	\int_{\IR^d}\langle p\rangle^\nu
	\abs{\widehat\varphi(p)}^2\,\d p
	\leq Cm_n^\nu.
	\]
	Hence, we get the required bound for the kinetic energy,
	\[
	\sum_{j=1}^n
	\langle\varphi_{j,n},\omega(\po)\varphi_{j,n}\rangle
	\leq C n m_n^\nu
	\leq Cn^{1+\nu/d}.
	\]
	
	Next, in order to estimate the interaction energy, we introduce the one-particle density matrix
	$
	\gamma_n(x,y)
	:=
	\sum_{j=1}^n
	\varphi_{j,n}(x)\overline{\varphi_{j,n}(y)},
	$
	and the corresponding one-particle density
	$\rho_n(x)= \gamma_n(x,x) =\sum_{j=1}^n|\varphi_{j,n}(x)|^2$, 
	 supported in $Q$.  Since the supports of the orbitals are
	disjoint, the same scaling yields
	\[
	\|\rho_n\|_\infty\leq C m_n^d\leq Cn,
	\qquad
	\int_{\IR^d}\rho_n(x)\,\d x=n.
	\]
	
With the usual decomposition into direct and exchange terms used in Hartree--Fock theory \cite{benedikter2020optimal}, we obtain
	\begin{align*}
		\Abs{\sum_{\substack{1\leq i,j\leq n\\i\neq j}}
		\int_{\IR^{dn}}W(x_i-x_j)\abs{\Psi_n(X)}^2\,\d X}
		&=  \Abs{	\iint_{Q\times Q} W(x-y)(        \rho_n(x)\rho_n(y)-|\gamma_n(x,y)|^2
			)  \,\d x\,\d y }
	\\
		&\leq
		\iint_{Q\times Q}
		|W(x-y)|\rho_n(x)\rho_n(y)\,\d x\,\d y \\
		&\leq
		\|\rho_n\|_\infty
		 \underbrace{ 		\left( \sup_{x\in Q}\int_Q|W(x-y)|\,\d y
		\right)}_{< \infty}
		\int_Q\rho_n(x)\,\d x \\
		&\leq Cn^2,
	\end{align*}
	where the supremum is finite because $W\in L^2+L^\infty$ and $Q-Q$ is bounded. Likewise,
	\begin{align*}
		\left|
		\left\langle
		\Psi_n,\sum_{j=1}^n U(x_j)\Psi_n
		\right\rangle
		\right|
		&=
		\left|\int_Q U(x)\rho_n(x)\,\d x\right| \leq
		\|\rho_n\|_\infty\|U\|_{L^1(Q)}
		\leq Cn.
	\end{align*}
	Combining the kinetic, interaction, and external-potential estimates, using
	$n^{1+\nu/d}\leq n^2$ because $\nu\leq d$,
	proves the statement. Note that $W^{(R)}\leq W$ implies
	$
	\Ec_n^{(R)}[\Psi_n] \leq\Ec_n[\Psi_n]$.
\end{proof}

As a next step, we study the stability of this cluster for time $t > 0$ through the following particle number leakage estimate. The proof uses some elementary ideas from the ASTLO technique proving propagation bounds, namely the control of the leaking observable by its derivative, a differential Gronwall-like inequality, and a smooth approximation of the indicator function, see e.g.
\cite{arbunich2021maximal} and
\cite[proof of Prop.~3.1]
{hinrichs2024lieb}.
\begin{lem}[Leakage estimate]
	\label{lem:quadratic-leakage}
	Assume \cref{hyp:dispersion,hyp:pair-potential,hyp:external-form}. Let
	$B_0\Subset B\subset\IR^d$, with $B$ open, and let
	$A\subset\IR^d$ be measurable with
	$A\cap\overline B=\varnothing$. If $\Psi_n \in \Def(\Ec_n) $ are normalized satisfying
	\[
	\supp\Psi_n\subset B_0^n,
	\qquad
	\Ec_n[\Psi_n]\leq C_0n^2,
	\]
	then
	\[
			\left\langle
		\Nf_A
		\right\rangle_{n,s} \leq \left\langle
		\Nf_{B^c}
		\right\rangle_{n,s}
		\leq Cn^2s^2,
	\qquad s\in\IR,
	\]
	where $\langle\,\cdot\,\rangle_{n,s}$ denotes expectation in
	$e^{-\i sH_n}\Psi_n$, and $C$ is independent of $n$ and $s$. The same estimate holds when the expectations are taken in
	$e^{-\i sH_n^{(R)}}\Psi_n$, with $C$ independent also
	of $R>0$.
\end{lem}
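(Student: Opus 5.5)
The plan is a standard "observable controlled by its derivative" argument. I will use a smooth localizing function, energy conservation to bound the kinetic energy uniformly in time, and a Gronwall-type inequality for the square root of the leaked particle number.

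\emph{Step 1 (smooth cutoff).} Since $\overline{B_0}$ is compact and contained in the open set $B$, I choose a real-valued $\chi\in C_c^\infty(B)$ with $0\leq\chi\leq1$ and $\chi=1$ on a neighbourhood of $\overline{B_0}$. I set $\eta:=1-\chi$, so that $\eta$ is smooth, $0\leq\eta\leq1$, $\eta=0$ near $B_0$, and $\eta=1$ on $B^c$. Then $\chr_A\leq\chr_{B^c}\leq\eta^2$ as multiplication operators, hence $\Nf_A\leq\Nf_{B^c}\leq\dG(\eta^2)$. It therefore suffices to bound
\[
f(s):=\langle\dG(\eta^2)\rangle_{n,s}=\sum_{j=1}^n\norm{\eta_j\Psi_{n,s}}^2,
\qquad \Psi_{n,s}:=e^{-\i sH_n}\Psi_n .
\]
Because $\supp\Psi_n\subset B_0^n$, we have $f(0)=0$.

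\emph{Step 2 (uniform kinetic bound).} Energy conservation gives $\Ec_n[\Psi_{n,s}]=\Ec_n[\Psi_n]\leq C_0n^2$ for all $s$. I drop the nonnegative contributions of $U_+$ and $W_+$. I use $W\geq-C_W$, which bounds the pair term below by $-C_Wn(n-1)$. I apply \cref{hyp:external-form} in each variable. Together these give
\[
(1-a)\langle\dG(\omega(\po))\rangle_{n,s}\leq C_0n^2+bn+C_Wn^2 .
\]
So $\sum_j\norm{\omega(\po_j)^{1/2}\Psi_{n,s}}^2\leq Cn^2$ uniformly in $s$. The identical bound holds for $H_n^{(R)}$: we have $W^{(R)}\geq-C_W$, and by \cref{lem:packed-states}-type monotonicity $\Ec^{(R)}_n\leq\Ec_n$, so the constant is independent of $R$.

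\emph{Step 3 (derivative and Gronwall).} The potentials commute with $\dG(\eta^2)$, so formally
\[
f'(s)=\i\langle[H_n,\dG(\eta^2)]\rangle_{n,s}=\i\sum_j\langle[\omega(\po_j),\eta_j^2]\rangle_{n,s}.
\]
I write $[\omega,\eta^2]=\eta[\omega,\eta]+[\omega,\eta]\eta$ and use $[\omega,\eta]=-[\omega,\chi]$. Cauchy--Schwarz in $j$ and in $L^2$, combined with \eqref{eq:dispersion est}, then give
\[
\abs{f'(s)}\leq2\Bigl(\sum_j\norm{\eta_j\Psi_{n,s}}^2\Bigr)^{1/2}\Bigl(\sum_jC_\chi\bigl(\norm{\omega(\po_j)^{1/2}\Psi_{n,s}}^2+1\bigr)\Bigr)^{1/2}\leq C n\,f(s)^{1/2}.
\]
Applying this to $g_\eps=(f+\eps)^{1/2}$ yields $\abs{g_\eps'}\leq Cn/2$. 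Hence $(f(s)+\eps)^{1/2}\leq\eps^{1/2}+Cn\abs s/2$, and letting $\eps\to0$ gives $f(s)\leq Cn^2s^2$.

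\emph{Main obstacle.} I expect the hardest part to be justifying the derivative formula, since $\Psi_{n,s}$ lies only in the form domain and not in $\Def(H_n)$. I would first do this for the truncated $H_n^{(R)}$, where the bounded part of the potential is harmless. Either the smoothness of $\eta$ lets multiplication by $\eta^2$ preserve the kinetic form domain, so that $s\mapsto f(s)$ is differentiable on form-domain data. Or one works with $\Psi_n$ replaced by vectors in $\Def(H_n)$, derives the integrated inequality there, and passes to the limit, using strong continuity and the $R$-uniform energy bound. For the untruncated dynamics I would either argue directly in the same way, or pass $R\to\infty$ via strong resolvent convergence of $H_n^{(R)}\to H_n$ (monotone convergence of forms). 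The bound transfers because $f$ is the expectation of a bounded operator and $C$ is independent of $R$.
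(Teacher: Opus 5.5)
Your proposal is correct and is essentially the paper's proof: your $\eta=1-\chi$ is the paper's cutoff $\chi$, and the paper uses the same ingredients, namely the commutator bound from \eqref{eq:dispersion est}, energy conservation giving $\langle\dG(\omega(\po))\rangle_{n,s}\leq Cn^2$ uniformly in $R$, and the regularized square-root Gronwall step. For the technical point you flag, the paper uses your two options together rather than as alternatives: it first proves that $\dG(\eta^2)$ preserves the form domain by an IMS-type identity with a smooth partition $\eta^2+\rho^2=1$ and the commutator hypothesis \eqref{eq:dispersion est}, which is needed because smoothness of $\eta$ alone does not give this. That invariance is what allows the derivative to be written as a commutator on spectrally cut-off vectors in $\Def(H_n)$ before passing to the limit.
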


\begin{proof}
Choose $\chi,\rho\in C^\infty(\IR^d;[0,1])$ such that
$\chi-1,\rho\in C_c^\infty(B)$,
$\chi$ vanishes on a neighborhood of $\overline{B_0}$, and
$
\chi^2+\rho^2=1.
$
In particular, $\chi^2=1$ on $B^c$.
We control the leakage through
$\langle\dG(\chi^2)\rangle_{n,s}$.

Before taking the time derivative, we first have to show that $\dG(\chi^2)$ leaves the form domain $\Def(\Ec_n)$ invariant. 
Since $\chi-1,\rho\in C_c^\infty(\IR^d)$,
\eqref{eq:dispersion est} applies to both commutators
$[\omega(\po),\chi]$ and $[\omega(\po),\rho]$.
For $u\in C_c^\infty(\IR^d)$, using
$\chi^2+\rho^2=1$ and Cauchy--Schwarz, we obtain
\begin{align*}
	\norm{\omega(\po)^{1/2}\chi u}^2+\norm{\omega(\po)^{1/2}\rho u}^2
	&=
	\norm{\omega(\po)^{1/2}u}^2
	+\operatorname{Re}\langle\chi u,[\omega(\po),\chi]u\rangle
	+\operatorname{Re}\langle\rho u,[\omega(\po),\rho]u\rangle\\
	&\leq
	\norm{\omega(\po)^{1/2}u}^2
	+C\norm{u}\norm{(\omega(\po)+1)^{1/2}u}\\
	&\leq C'\norm{(\omega(\po)+1)^{1/2}u}^2.
\end{align*}
Discarding the nonnegative term
$\norm{\omega(\po)^{1/2}\rho u}^2$ and using
$\norm{\chi u}\leq\norm{u}$, we conclude that
\begin{align}
	\label{eq:cutoff bound}
\norm{(\omega(\po)+1)^{1/2}\chi u}
\leq C''\norm{(\omega(\po)+1)^{1/2}u}.
\end{align}
Under \cref{hyp:dispersion}, $C_c^\infty(\IR^d)$ is a form
core for $\omega(\po)$, so closedness of $\omega(\po)^{1/2}$ extends this estimate
to every $u\in\Def(\omega(\po)^{1/2})$.
Thus multiplication by $\chi$, and hence by $\chi^2$, is
bounded on the kinetic form domain.
Applying this in each coordinate shows that
$\dG(\chi^2)$ preserves the kinetic form domain.
As a bounded multiplication operator, it also preserves the
potential form domains. Therefore, $\dG(\chi^2)$ is
bounded on $\Def(\Ec_n)$ with its form norm.

The same density argument also gives, for all $u\in\Def(\omega(\po)^{1/2})$,
\begin{align}
	\label{eq:one particle comm}
	\i\Bigl(
	\langle \omega(\po)^{1/2}u,\omega(\po)^{1/2}\chi^2u\rangle
	-
	\langle \omega(\po)^{1/2}\chi^2u,\omega(\po)^{1/2}u\rangle
	\Bigr)
	=
	2\operatorname{Re}\langle\chi u,\i[\omega(\po),\chi]u\rangle.
\end{align}
Indeed, this identity follows by expansion for
$u\in C_c^\infty(\IR^d)$, and both sides are continuous
in the kinetic form norm by \eqref{eq:cutoff bound} and
\eqref{eq:dispersion est}.
Here $[\omega(\po),\chi]$ denotes its continuous extension
from $C_c^\infty(\IR^d)$ to
$\Def(\omega(\po)^{1/2})$.

	Write $\Psi_n(s):=e^{-\i sH_n}\Psi_n$, $E_n := \inf \sigma(H_n)$,  and 
	\[
	\Psi_{n,\ell}:=\chr_{[0,\ell]}(H_n - E_n)\Psi_n.
	\]
	Then $\Psi_{n,\ell}\in\Def(H_n)$, and
	\begin{align}
		\label{eq:Kn form conv}
	\norm{
		(H_n - E_n+1)^{1/2}e^{-\i sH_n}
		(\Psi_{n,\ell}-\Psi_n)
	}
	=
	\norm{(H_n - E_n+1)^{1/2}(\Psi_{n,\ell}-\Psi_n)}
	\longrightarrow0
\end{align}
	uniformly in $s$.
Set $\Psi_{n,\ell}(s):=e^{-\i sH_n}\Psi_{n,\ell} \in \Def(H_n) $ . Then
\begin{align}
	&\frac{\d}{\d s}
	 \cs{ \Psi_{n,\ell}(s),
	\dG(\chi^2)\Psi_{n,\ell}(s) } \nonumber \\
	&\quad=
	\i \left( \cs{ H_n 
	\Psi_{n,\ell}(s),
	\dG(\chi^2)\Psi_{n,\ell}(s) }
	- \cs{ 
	\Psi_{n,\ell}(s),
	\dG(\chi^2)  H_n  \Psi_{n,\ell}(s) }
	\right)  \nonumber \\
	&\quad= \i\Bigl( \Ec_n\bigl( \Psi_{n,\ell}(s), \dG(\chi^2)\Psi_{n,\ell}(s) \bigr) - \Ec_n\bigl( \dG(\chi^2)\Psi_{n,\ell}(s), \Psi_{n,\ell}(s) \bigr) \Bigr) \nonumber \\
	&\quad= 	\i \left( \cs{ \dG(\omega(\po))^{1/2} 
		\Psi_{n,\ell}(s),
		  \dG(\omega(\po))^{1/2}  \dG(\chi^2)\Psi_{n,\ell}(s) }
	- \textbf{h.c.}
\right) \nonumber \\
	&\quad=
	2\operatorname{Re}\sum_{j=1}^n
	\langle\chi_j\Psi_{n,\ell}(s),
	\i[\omega(\po),\chi]_j\Psi_{n,\ell}(s)\rangle, \label{eq:form derivative}
\end{align}
where the subscript $j$ denotes the action of the respective operator on the $j$-th component of the tensor product.  The potential contributions in the penultimate equality cancel because the potentials
and $\dG(\chi^2)$ are real multiplication operators.
The last equality follows from \eqref{eq:one particle comm}, applied in each coordinate, and terms acting in different coordinates commute.

We can now take the limit $\ell \to \infty$ in the first and last expressions in \eqref{eq:form derivative}  and replace $\Psi_{n,\ell}(s)$ by $\Psi_{n}(s)$: first integrate, then take the limit using \eqref{eq:Kn form conv}, \eqref{eq:dispersion est} and that full form norm convergence implies convergence in the
kinetic form norm, and finally differentiate again. 
Then Cauchy--Schwarz, followed by
\eqref{eq:dispersion est}, yields, for every $s$, the  differential inequality
\begin{align}
	\left|
	\frac{\d}{\d s}
	\left\langle\dG(\chi^2)\right\rangle_{n,s}
	\right|
	&=
	2 \Abs{ \Re \sum_{j=1}^n
	\langle\chi_j\Psi_n(s),\i[\omega(\po),\chi]_j\Psi_n(s)\rangle } \nonumber \\
	&\leq
	2
	\left\langle\dG(\chi^2)\right\rangle_{n,s}^{1/2}
	\left(
	\sum_{j=1}^n
	\norm{[\omega(\po),\chi]_j \Psi_n(s)}^2
	\right)^{1/2} \notag\\
	&\leq
	C_\chi
	\left\langle\dG(\chi^2)\right\rangle_{n,s}^{1/2}
	\left(
	\left\langle\dG(\omega(\po))\right\rangle_{n,s}+n
	\right)^{1/2}.
\label{eq:current-estimate}
\end{align}
	To control the kinetic-energy term, we use
	\Cref{hyp:external-form} and the lower bound on $W$,
	which imply the quadratic-form inequality
	\[
	H_n
	\geq
	(1-a)\dG(\omega(\po))-bn-C_Wn(n-1).
	\]
	Therefore, due to energy conservation and the assumed initial-energy bound,
	\begin{align}
	\sup_{s\in\IR}
	\left\langle\dG(\omega(\po))\right\rangle_{n,s}
	\leq
	\frac{C_0+b+C_W}{1-a}\,n^2
	\leq Cn^2.
	\label{eq:uniform-kinetic}
	\end{align}
	
	Because $\chi$ vanishes near the support of $\Psi_n$,
	$\left\langle\dG(\chi^2)\right\rangle_{n,0}=0.$
	Dividing \eqref{eq:current-estimate} by
	$(
	\left\langle\dG(\chi^2)\right\rangle_{n,s}
	+\delta
	)^{1/2}$, 
	integrating, and finally letting $\delta\downarrow0$, we obtain
	\[
	\left\langle\dG(\chi^2)\right\rangle_{n,s}^{1/2}
	\leq
	C_\chi\abs{s}
	\sup_{r\in\IR}
	\left(
	\left\langle\dG(\omega(\po))\right\rangle_{n,r}+n
	\right)^{1/2}
	\leq Cn\abs{s},
	\]
	where we used \eqref{eq:uniform-kinetic} in the last step.
	Finally, $A\subset B^c$ and $\chi^2=1$ on $B^c$, so
	\[
	\Nf_A
	\leq
	\Nf_{B^c}
	\leq\dG(\chi^2).
	\]
	Taking expectations proves the claimed leakage estimate.
	
	The proof applies equally to $H_n^{(R)}$, since
	\[
	W^{(R)}\geq-C_W,
	\qquad
	\Ec_n^{(R)}[\Psi_n]
	\leq\Ec_n[\Psi_n]\leq C_0n^2.
	\]
	The constants in the leakage estimate depend on the interaction
	only through $C_W$ and $C_0$, and are therefore independent
	of $R$.
\end{proof}

\subsection{Antisymmetrization and the cross-interaction}

We temporarily regard the added particle as distinguishable from the $n$-particle cluster. To this end, we work on $\fh\otimes\bigwedge\nolimits^n\fh$ and consider
\[
J_n:=\sqrt{n+1}\,P_-^{n+1},
\]
where $P_-^{n+1} \:
\fh\otimes\bigwedge\nolimits^n\fh
\longrightarrow
\bigwedge\nolimits^{n+1}\fh$ denotes the full antisymmetrization projection, so that
\begin{align}
J_n(f\otimes\Psi)=a^*(f)\Psi.
\label{eq:J-creation}
\end{align}
A naive estimate yields $\norm{J_n}\leq\sqrt{n+1}$, which adds $\sqrt n$
into every interaction estimate. However, this bound is only sharp if the distinguished particle can overlap with all other particles. 
 If it lies in $A$, then only particles already lying in $A$ matter. 
\begin{lem}[Localized antisymmetrization]
	\label{lem:localized-antisymmetrization}
	Let $A\subset\IR^d$ be measurable.  If
	$\Psi\in L^2(A)\otimes\bigwedge\nolimits^n\fh$, then
	\begin{align*}
	\norm{J_n\Psi}^2
	\leq
	\langle\Psi,(\Id\otimes(\Nf_A+\Id))\Psi\rangle.
	\end{align*}
	Further, if $A\cap B=\varnothing$, then $J_n$ is isometric on
	$
	L^2(A)\otimes\bigwedge\nolimits^nL^2(B).
	$
\end{lem}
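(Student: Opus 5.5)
We compute $\norm{J_n\Psi}^2$ directly by expanding the antisymmetrization, rather than bounding $\norm{J_n}$ crudely. Write $\Psi(x_0;x_1,\dots,x_n)$, antisymmetric in $x_1,\dots,x_n$ and supported in $x_0\in A$. With $P_-^{n+1}=\frac{1}{n+1}\sum_{k=0}^n \pm T_{0k}$ on the partially antisymmetric space, where $T_{0k}$ transposes $x_0$ and $x_k$ and $T_{00}=\Id$ (standard, since $\Psi$ is already antisymmetric in the last $n$ variables), we have
\[
\norm{J_n\Psi}^2=(n+1)\langle\Psi,P_-^{n+1}\Psi\rangle=\langle\Psi,\Psi\rangle-\sum_{k=1}^n\langle\Psi,T_{0k}\Psi\rangle .
\]
By antisymmetry in $x_1,\dots,x_n$, all $n$ exchange terms are equal, so
\[
\norm{J_n\Psi}^2=\norm\Psi^2-n\langle\Psi,T_{01}\Psi\rangle .
\]

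The key observation is the support restriction. Since $\Psi$ vanishes unless $x_0\in A$, the function $T_{01}\Psi$ vanishes unless $x_1\in A$. Hence we may insert $\chr_A(x_1)$, which gives
\[
\langle\Psi,T_{01}\Psi\rangle=\langle\chr_A(x_1)\Psi,T_{01}\chr_A(x_1)\Psi\rangle ,
\]
and because $T_{01}$ is unitary, Cauchy--Schwarz yields
\[
\abs{\langle\Psi,T_{01}\Psi\rangle}\leq\norm{\chr_A(x_1)\Psi}^2 .
\]
Using symmetry in $x_1,\dots,x_n$ once more,
\[
n\norm{\chr_A(x_1)\Psi}^2=\sum_{j=1}^n\norm{\chr_A(x_j)\Psi}^2=\langle\Psi,(\Id\otimes\Nf_A)\Psi\rangle .
\]
Combining these gives the first inequality.

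For the second claim, suppose $\Psi\in L^2(A)\otimes\bigwedge^nL^2(B)$ with $A\cap B=\varnothing$. Then $\chr_A(x_1)\Psi=0$, so every exchange term vanishes and $\norm{J_n\Psi}=\norm\Psi$.

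The only delicate point is getting the sign convention and normalization of $P_-^{n+1}$ right on $\fh\otimes\bigwedge^n\fh$. To pin it down, I would check it against \eqref{eq:J-creation} on the simple tensor $f\otimes\varphi_1\wedge\cdots\wedge\varphi_n$; that check also confirms that the exchange terms enter with a minus sign. Since the final estimate uses only absolute values of the exchange terms, the sign does not actually affect the inequality.
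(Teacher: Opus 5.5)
Your proof is correct, but it takes a different route from the paper's.

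The paper splits $\bigwedge^n\fh\cong\bigoplus_{k=0}^n\bigwedge^kL^2(A)\otimes\bigwedge^{n-k}L^2(A^c)$ according to the number $k$ of cluster particles in $A$. It observes that $J_n$ acts on the $k$-th block as $\sqrt{k+1}\,P_{-,A}^{k+1}\otimes\Id$. The images of different blocks are orthogonal, since they contain different numbers of particles in $A$, so $\norm{J_n\Psi}^2\leq\sum_k(k+1)\norm{\Psi_k}^2$. The isometry statement is then the case in which only $k=0$ occurs.

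You instead use the coset decomposition $P_-^{n+1}=\frac{1}{n+1}\bigl(\Id-\sum_{k=1}^nT_{0k}\bigr)$ on $\fh\otimes\bigwedge^n\fh$. This follows from writing each $\sigma\in S_{n+1}$ as $\tau_{0,\sigma(0)}\pi$ with $\pi$ fixing $0$, and the $\frac{1}{n+1}$ normalization, not the sign, is the fact your argument really needs. It reduces everything to one exchange term, which the support of the distinguished variable lets you localize via $\chr_A(x_1)$ and bound by Cauchy--Schwarz.

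What each route buys:
\begin{itemize}
\item Your argument is more elementary. It avoids justifying the unitary identification of $\bigwedge^n\bigl(L^2(A)\oplus L^2(A^c)\bigr)$ with the graded tensor product and the blockwise form of $J_n$ under it.
\item Your argument also yields the two-sided estimate $\bigl|\norm{J_n\Psi}^2-\norm{\Psi}^2\bigr|\leq\langle\Psi,(\Id\otimes\Nf_A)\Psi\rangle$, from which the isometry on $L^2(A)\otimes\bigwedge^nL^2(B)$ is immediate.
\item The paper's decomposition exposes the exact structure behind the bound. It makes transparent that the $\sqrt{n+1}$ loss is replaced by $\sqrt{k+1}$, with $k$ the number of cluster particles already in $A$.
\end{itemize}
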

\begin{proof}
	We use the canonical unitary identification
	\[
	\bigwedge^n L^2(\IR^d)
	\cong
	\bigoplus_{k=0}^n
	\bigwedge^kL^2(A)\otimes\bigwedge^{n-k}L^2(A^c),
	\]
	and write $\Psi=\sum_{k=0}^n\Psi_k$, where $\Psi_k \in L^2(A) \otimes \bigwedge^k L^2(A) \otimes \bigwedge^{n-k} L^2(A^c) $ has
	exactly $k$ cluster particles in $A$. Then
	$J_n$ acts on the $k$-th component as
	\[
	\sqrt{k+1}\,P_{-,A}^{k+1}\otimes\Id_{\bigwedge^{n-k} L^2(A^c) },
	\]
	where $P_{-,A}^{k+1}$ is the orthogonal antisymmetrization
	projection on the $k+1$ factors in $L^2(A)$.
	Since the images have distinct particle numbers in $A$, they are orthogonal to each other, and hence
	\[
	\norm{J_n\Psi}^2
	=\sum_{k=0}^n\norm{J_n\Psi_k}^2
	\leq\sum_{k=0}^n(k+1)\norm{\Psi_k}^2
	=\langle\Psi,(\Id\otimes(\Nf_A+\Id))\Psi\rangle.
	\]
	If all cluster particles lie in $B$ disjoint from $A$,
	only $k=0$ occurs. Since $P_{-,A}^1=\Id$, the map
	is then isometric.
\end{proof}
The following lemma isolates the phase produced by a localized many-particle cluster. If the added particle is localized in $A$, the $n$ cluster particles are localized in the disjoint region $B$, and $W(x-y)$ is approximately equal to a constant $c$ on $A\times B$, then their cross-interaction is approximately the scalar $2nc$. Hence, for short times, the full evolution differs from the separated evolution of the added particle and the cluster essentially by the phase $e^{-2\i nct}$. The lemma quantifies the errors caused by the variation of $W$ and by particles leaking out of their localization regions.
 
Subsequently, we will write
 \[
 u_t=e^{-\i t\ho}u,
 \qquad
 \Psi_n(t)=e^{-\i tH_n}\Psi_n
 \]
 for the one-body and many-body time evolutions.
\begin{lem}[Phase generated by a localized cluster]
		\label{lem:localized-phase}
	Assume
	\cref{hyp:dispersion,hyp:pair-potential,hyp:external-form,hyp:local-propagation}.
	Let $A,B\subset\IR^d$ be open sets with disjoint closures and
	$A\Subset\Omega_U$. Suppose that, for some $c\in\IR$ and
	$\eps>0$,
	\begin{align}
	\abs{W(x-y)-c}\leq\eps
	\qquad\text{for a.e. }(x,y)\in A\times B.
	\label{eq:non constant interaction}
\end{align}
	Let $u\in C_c^\infty(A)$ be normalized, fix a nonempty bounded open
	set $B_0\Subset B$, and let $\Psi_n$ be the packed Slater
	determinants from \cref{lem:packed-states} supported in $B_0^n$.
	Then, for some $C>0$ independent of $n$,
	\[
	\norm{
		e^{-\i tH_{n+1}}a^*(u)\Psi_n
		-e^{-2\i nct}a^*(u_t)\Psi_n(t)}
	\leq
	2\eps nt+Cn^2t^2,
	\qquad 0\leq t\leq\frac1n.
	\]
	The same estimate holds with $H_{n+1}$ replaced by
	$H_{n+1}^{(R)}$ and $\Psi_n(t)$ replaced by
	$e^{-\i tH_n^{(R)}}\Psi_n$, with the same constant $C$,
	for every $R>\max\{0,c+\eps\}$.
\end{lem}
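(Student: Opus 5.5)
Work on $\fh\otimes\bigwedge^n\fh$ with the distinguishable Hamiltonian
\[
\widetilde H:=\ho\otimes\Id+\Id\otimes H_n+2\sum_{j=1}^n W(x_0-x_j),
\]
where $x_0$ is the coordinate of the added particle. On antisymmetric vectors $H_{n+1}$ and $\widetilde H$ agree, since the cross terms $W(x_0-x_j)+W(x_j-x_0)$ give the factor $2$. Thus $J_n\widetilde H\subset H_{n+1}J_n$ in the form sense, and $e^{-\i tH_{n+1}}J_n=J_ne^{-\i t\widetilde H}$; I would check this on the form domain, for instance via the truncated $W^{(R)}$ first. By \eqref{eq:J-creation},
\[
e^{-\i tH_{n+1}}a^*(u)\Psi_n=J_n e^{-\i t\widetilde H}(u\otimes\Psi_n),\qquad e^{-2\i nct}a^*(u_t)\Psi_n(t)=J_n\Phi(t),
\]
where $\Phi(t):=e^{-\i t(\widetilde H_0+2nc)}(u\otimes\Psi_n)$ and $\widetilde H_0=\ho\otimes\Id+\Id\otimes H_n$. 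The task is to bound $\norm{J_n(\Xi(t)-\Phi(t))}$ with $\Xi(t):=e^{-\i t\widetilde H}(u\otimes\Psi_n)$.

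\textbf{Step 1: Duhamel in the distinguishable space.} We have
\[
\Xi(t)-\Phi(t)=-\i\int_0^t e^{-\i(t-s)\widetilde H}\,V\,\Phi(s)\,\d s,\qquad V:=2\sum_{j=1}^n\bigl(W(x_0-x_j)-c\bigr).
\]
The naive approach would bound $\norm{J_n}\le\sqrt{n+1}$, which is too large. Instead I would apply $J_n$ first and use the intertwining to move $J_n$ past $e^{-\i(t-s)H_{n+1}}$, which is unitary. It then suffices to bound $\int_0^t\norm{J_n V\Phi(s)}\,\d s$.

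\textbf{Step 2: split $V\Phi(s)$ by localization.} Here $\Phi(s)=e^{-2\i ncs}u_s\otimes\Psi_n(s)$. Let $\chi_A$ be the indicator of $A$ and split $u_s=\chi_Au_s+\chi_{A^c}u_s$; likewise split the cluster particles into those in $B$ and those in $B^c$.

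\emph{Main term.} On the part with the added particle in $A$ and all cluster particles in $B$, the bound \eqref{eq:non constant interaction} gives $\abs{V}\le 2n\eps$ pointwise, and by \cref{lem:localized-antisymmetrization} $J_n$ is isometric there. This contributes $2n\eps$ per unit time, hence $2\eps nt$ after integration.

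\emph{Error terms.} These are of two kinds.
\begin{enumerate}
\item The added particle leaves $A$. Hypothesis \cref{hyp:local-propagation} gives $\norm{u_s-u}_{H^m}\le C\abs s$, and $m>d/2$ implies $\norm{u_s-u}_\infty\le Cs$. Hence $\chi_{A^c}u_s=\chi_{A^c}(u_s-u)$ is of order $s$ in both $L^2$ and $L^\infty$. The $L^\infty$ bound is what controls the singular interaction: $\sum_j\abs{W(x_0-x_j)}$ acting against a bounded function of $x_0$ is bounded via $\norm{W}_{L^2+L^\infty}$ times an $L^2$ norm, giving roughly $n\cdot s$. Multiplying by the naive factor $\sqrt n$ from $J_n$ makes this term too big, so I would use \cref{lem:localized-antisymmetrization} with the added particle in $A^c$ and the crude factor $(\Nf+1)^{1/2}$; this needs care. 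Alternatively, one can avoid $J_n$ here entirely: apply Duhamel on the symmetric side and use $\norm{a^*(g)}=\norm g$ for the separated terms.
\item Cluster particles leave $B$. By \cref{lem:quadratic-leakage}, $\langle\Nf_{B^c}\rangle_{n,s}\le Cn^2s^2$. The number of escaped particles times the size of $W$ against $u_s\in L^\infty\cap L^2$ gives a contribution of order $\norm{u}_\infty\norm{W}_{L^2+L^\infty}\langle\Nf_{B^c}\rangle^{1/2}$ up to factors, so about $n\cdot ns$ after Cauchy--Schwarz. Integrating over $[0,t]$ gives $Cn^2t^2$.
\end{enumerate}

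\textbf{Main obstacle.} The delicate point is making these error terms $O(n^2s)$ without an extra $\sqrt n$ from $J_n$, and controlling the singular part $W_2$ in operator norm rather than only in expectation. I would handle $W_2$ by bounding $\sum_j\abs{W_2(x_0-x_j)}$ with $x_0$ integrated against $\abs{u_s}^2\in L^\infty$, which gives a bound by $\norm{W_2}_2$ per particle. For the leakage factor I would use \cref{lem:localized-antisymmetrization} with $\Nf_A$, where $A$ is disjoint from $\overline B$, so that $\langle\Nf_A+1\rangle\le 1+Cn^2s^2\le C$ for $s\le1/n$.

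For the truncated version I would run the identical argument: every constant depends only on $C_W$, on the bound for $\Ec_n[\Psi_n]$, and on $\norm{W}_{L^2+L^\infty}$. For $R>c+\eps$, the truncation leaves $W$ unchanged on $A-B$, so \eqref{eq:non constant interaction} still holds there.
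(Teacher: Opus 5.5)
Your strategy is the paper's. You use Duhamel on $\fh\otimes\bigwedge^n\fh$ between $\widetilde H$ and $\widetilde H_0+2nc$, and pull $J_n$ through the unitary group via the intertwining. You then split $V\Phi(s)$ into three pieces: the added particle in $A$ with the cluster in $B$ (where $J_n$ is isometric and $\abs{V}\le 2\eps n$), a leaked-cluster piece, and a piece where the added particle has left $A$. Using the indicator of $A$ instead of the paper's smooth cutoff is harmless, since the multiplier bound only needs $L^2\cap L^\infty$ control, which $m>d/2$ provides. Three points need correcting, though.

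\emph{The term where the added particle leaves $A$ is a false alarm.} With the trivial bound $\norm{J_n}\le\sqrt{n+1}$ it is $O(ns)\cdot\sqrt{n+1}=O(n^{3/2}s)$. This integrates to $O(n^{3/2}t^2)\le Cn^2t^2$, and that is exactly how the paper treats it. Your proposed fix gains nothing: applying the localized antisymmetrization with the added particle in $A^c$ brings in the weight $\langle\Nf_{A^c}+1\rangle$. Since the cluster sits in $B\subset A^c$, that weight is of order $n+1$, the same as the naive bound.

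\emph{The leaked-cluster term works by a different mechanism than the one you give.} It is not that $\langle\Nf_A+1\rangle$ is $O(1)$ in $\Psi_n(s)$. Apply the localized antisymmetrization to $\eta_s=V(\chi_Au_s\otimes\widetilde\Phi)$ with $\widetilde\Phi=(\Id-P_n)\Psi_n(s)$, and commute $V$ with $\Nf_A$. Then use that both $\norm{\widetilde\Phi}^2\le\langle\Nf_{B^c}\rangle_{n,s}$ and $\langle\widetilde\Phi,\Nf_A\widetilde\Phi\rangle=\langle\Nf_A\rangle_{n,s}$ (because $\Nf_AP_n=0$) are $O(n^2s^2)$. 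This gives $\norm{J_n\eta_s}\le Cn^2s$. If you only bounded the $\Nf_A$-weight by a constant, you would get $\norm{J_n\eta_s}=O(n)$, i.e.\ an $O(1)$ error at $t\sim 1/n$, which would swamp the phase.

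\emph{The untruncated statement needs the limit $R\to\infty$.} Running the argument for bounded $W^{(R)}$ with $R$-uniform constants is only half of it. You then need that $\Ec_n^{(R)}\uparrow\Ec_n$ with a common lower bound. Monotone convergence of forms gives $H_n^{(R)}\to H_n$ in the strong resolvent sense, so the estimate passes to the limit for fixed $n$ and $t$.
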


\begin{proof}
	On $\fh\otimes\bigwedge^n\fh$, set
	\[
	\widetilde{H}_n=\ho\otimes\Id+\Id\otimes H_n,
	\qquad
	\widetilde{W}_n=2\sum_{j=1}^nW(x-y_j).
	\]
	We first assume that $W$ is bounded, so that
	$\widetilde{W}_n$ is bounded and Duhamel's formula applies:
	\[
		e^{-\i t(\widetilde{H}_n+\widetilde{W}_n)}
		-e^{-\i t(\widetilde{H}_n+2nc)} =
		-\i\int_0^t
		e^{-\i(t-s)(\widetilde{H}_n+\widetilde{W}_n)}
		(\widetilde{W}_n-2nc)
		e^{-\i s(\widetilde{H}_n+2nc)}
		\,\d s.
	\]
	We then apply this identity to the vector 
	 $u\otimes\Psi_n$,  and let  $J_n$ act from the left.  
	Using \eqref{eq:J-creation} and the identities 
	\begin{align*}
J_n(\widetilde{H}_n+\widetilde{W}_n) & =H_{n+1}J_n, \quad  e^{-\i tH_{n+1}}a^*(u)\Psi_n = J_ne^{-\i t(\widetilde{H}_n+\widetilde{W}_n)}(u\otimes\Psi_n), \\ 	e^{-\i s(\widetilde{H}_n+2nc)} (u\otimes\Psi_n)	&= e^{-2\i ncs}(u_s\otimes\Psi_n(s)),
	\end{align*}
	we arrive at
	\[
		e^{-\i tH_{n+1}}a^*(u)\Psi_n
		-e^{-2\i nct}a^*(u_t)\Psi_n(t)
		=
		-\i\int_0^t
		e^{-\i(t-s)H_{n+1}}
		J_n(\widetilde{W}_n-2nc)
		e^{-2\i ncs}(u_s\otimes\Psi_n(s))
		\,\d s.
	\]
	Hence,
	\begin{align}
		\norm{
			e^{-\i tH_{n+1}}a^*(u)\Psi_n
			-e^{-2\i nct}a^*(u_t)\Psi_n(t)}
\leq
		\int_0^t
		\norm{
			J_n(\widetilde{W}_n-2nc)(u_s\otimes\Psi_n(s))
		}\,\d s.
		\label{eq:after duhamel}
\end{align}
	
	Choose $\chi\in C_c^\infty(A)$ with $0\leq \chi\leq1$ and $\chi=1$ near $\supp u$.  After some time $s$, $u_s$ need not remain supported anymore in $\supp u$, but the error can be controlled with \Cref{hyp:local-propagation} by
	\[
	\sup_{\abs{s}\leq1}\norm{\chi u_s}_{H^m}\leq C,
	\qquad
	\norm{(1-\chi)u_s}_{H^m}\leq C\abs{s}.
	\]
	Let
	\[ 
	P_n : =\chr_B^{\otimes n}\big|_{\bigwedge^n\fh}.
	\]
	Then we have $\Id-P_n \leq	\Nf_{B^c}$, and hence by 	\cref{lem:quadratic-leakage},
	\begin{align}
		\norm{(\Id-P_n)\Psi_n(s)}^2
		\leq
		\left\langle\Nf_{B^c}\right\rangle_{n,s} \leq  Cn^2s^2  
		\label{eq:qleak1}
	\end{align}
		Furthermore, $A\cap B=\varnothing$ implies $\Nf_AP_n	=P_n\Nf_A=0$, and therefore,
	\begin{align}
		\left\langle
		(\Id-P_n)\Psi_n(s),
		\Nf_A(\Id-P_n)\Psi_n(s)
		\right\rangle =  \left\langle\Psi_n(s),
		\Nf_A\Psi_n(s)
		\right\rangle
		\leq Cn^2s^2.
				\label{eq:qleak2}
\end{align}

	We now decompose the vector in the integrand of \eqref{eq:after duhamel} as
	\begin{align}
	u_s\otimes\Psi_n(s)	=\chi u_s\otimes P_n\Psi_n(s)+	\chi u_s\otimes(\Id-P_n)\Psi_n(s)+(1-\chi)u_s\otimes\Psi_n(s).
		\label{eq:three terms}
	\end{align}
We estimate the three terms in \eqref{eq:three terms} separately.  For the first one, notice that
the almost-constant interaction assumption \eqref{eq:non constant interaction} implies, almost everywhere on
$A\times B^n$,
\[
\abs{\widetilde{W}_n-2nc}
= 2\left| \sum_{j=1}^n\bigl(W(x-y_j)-c\bigr) \right| 
\leq 2\eps n.
\]
Moreover, $J_n$ is isometric on
$L^2(A)\otimes\bigwedge^nL^2(B)$ by
\cref{lem:localized-antisymmetrization}.  Hence, using
$\norm{\chi u_s}\leq\norm{u_s}=1$ and
$\norm{P_n\Psi_n(s)}\leq1$, the first one can be estimated by
\begin{align}
\norm{
		J_n(\widetilde{W}_n-2nc)
		\bigl(\chi u_s\otimes P_n\Psi_n(s)\bigr)
	} =
	\norm{
		(\widetilde{W}_n-2nc)
		\bigl(\chi u_s\otimes P_n\Psi_n(s)\bigr)
	}
	\leq 2\eps n.
	\label{eq:three est 1}
\end{align}

For the remaining two terms, first fix the number $m>d/2$ from
\cref{hyp:local-propagation}.  Sobolev embedding and
\cref{hyp:pair-potential} yield the translation-uniform multiplier
bound
\begin{align}
	\norm{W(\,\cdot-y)v}
	\leq
	\norm{W_2}\norm v_\infty
	+\norm{W_\infty}_\infty\norm v
	\leq K_m\norm v_{H^m},
	\qquad y\in\IR^d.
\label{eq:uniform-multiplier}
\end{align}
Thus, 
\eqref{eq:uniform-multiplier} and the triangle inequality imply
\begin{align}
	\norm{(\widetilde{W}_n-2nc)(v\otimes\Phi)}
	&=
	2\Norm{
		\sum_{j=1}^n
		\bigl(W(x-y_j)-c\bigr)(v\otimes\Phi)
	}
	\leq
	K_{m,c}n\norm{v}_{H^m}\norm{\Phi}
\label{eq:cross-before-antisym}
\end{align}
for every $v\in H^m(\IR^d)$ and
$\Phi\in\bigwedge^n\fh$.

For the second part in which at least one cluster particle has left $B$, set
\[
\widetilde\Phi_n(s)
=
(\Id-P_n)\Psi_n(s),
\qquad
\eta_s
=
(\widetilde{W}_n-2nc)
\bigl(\chi u_s\otimes\widetilde\Phi_n(s)\bigr).
\]
Since $\chi u_s$ is supported in $A$, the distinguished-particle
variable of $\eta_s$ is also supported in $A$.  Hence
\cref{lem:localized-antisymmetrization} gives
\[
\norm{J_n\eta_s}^2
\leq
\norm{\eta_s}^2
+
\left\langle
\eta_s,
\bigl(\Id\otimes\Nf_A\bigr)\eta_s
\right\rangle.
\]
Since $\widetilde{W}_n$ and $\Nf_A$ are multiplication operators, they
commute.  Applying \eqref{eq:cross-before-antisym} first to
$\widetilde\Phi_n(s)$ and then to
$\Nf_A^{1/2}\widetilde\Phi_n(s)$, we obtain
\begin{align}
	\norm{J_n\eta_s}^2
	\leq
	Cn^2\norm{\chi u_s}_{H^m}^2
	\Bigl(
	\norm{\widetilde\Phi_n(s)}^2
	+
	\left\langle
	\widetilde\Phi_n(s),
	\Nf_A\widetilde\Phi_n(s)
	\right\rangle
	\Bigr) 
\leq Cn^4 s^2,
\label{eq:three est 2}
\end{align}
where we used \Cref{hyp:local-propagation} to show that $\norm{\chi u_s}_{H^m}^2$ is uniformly bounded in $\abs s \leq 1$, and  the cluster-leakage estimates \eqref{eq:qleak1} and  \eqref{eq:qleak2}.

Finally, in the third term in \eqref{eq:three terms}, $(1-\chi)u_s$ need not be supported in $A$, so we use the trivial estimate
$\norm{J_n}\leq\sqrt{n+1}$ together with
\eqref{eq:cross-before-antisym}.  Since $(1-\chi)u=0$, \Cref{hyp:local-propagation} implies
\[
\norm{(1-\chi)u_s}_{H^m}
\leq C\abs{s},
\qquad \abs{s}\leq1.
\]
Moreover, $\norm{\Psi_n(s)}=1$.  Consequently,
\begin{align}
	\norm{
		J_n(\widetilde{W}_n-2nc)
		\bigl((1-\chi)u_s\otimes\Psi_n(s)\bigr)
	}
\leq
	Cn\sqrt{n+1}\,
	\norm{(1-\chi)u_s}_{H^m}
	\leq Cn^{3/2}\abs{s},
	\label{eq:three est 3}
\end{align}

	Collecting the three estimates \eqref{eq:three est 1}, \eqref{eq:three est 2} and \eqref{eq:three est 3}, and integrating them in \eqref{eq:after duhamel} gives
	\[
	2\eps nt+Cn^2t^2+Cn^{3/2}t^2
	\leq
	2\eps nt+Cn^2t^2,
	\]
	which proves the assertion for bounded $W$.
	
	For general $W$, apply this argument first to $W^{(R)}$.
	If $R>\max\{0,c+\eps\}$, then $W^{(R)}=W$ on the
	differences $x-y$ occurring in
	\eqref{eq:non constant interaction}, so that condition holds
	with the same $c$ and $\eps$.
	Moreover, $\abs{W^{(R)}}\leq\abs W$, and hence
	\[
	\norm{W^{(R)}(\,\cdot-y)v}
	\leq\norm{W(\,\cdot-y)v}
	\leq K_m\norm{v}_{H^m}
	\]
	with the same constant as in \eqref{eq:uniform-multiplier}.
	Together with the uniform statements in
	\cref{lem:packed-states,lem:quadratic-leakage}, this shows
	that the preceding argument gives the asserted truncated
	estimate with $C$ independent of $n$ and $R$.
	
	For each fixed particle number $n$, the closed forms
	$\Ec_n^{(R)}$ increase to $\Ec_n$ and have a
	common lower bound. Monotone convergence of quadratic forms
	therefore implies strong resolvent convergence
	$H_n^{(R)}\to H_n$, and thus,
	$
	e^{-\i tH_n^{(R)}}\Phi
	\to e^{-\i tH_n}\Phi
	$
	for every $\Phi\in\bigwedge^n\fh$ and every $t\in\IR$.
	Taking $R\to\infty$ in the truncated estimate, with $n$
	and $t$ fixed, and using boundedness of $a^*(u_t)$,
	proves the general statement for $W$.
\end{proof}
\subsection{Proof of the main discontinuity theorem}
\label{sec:main result proof}
\begin{proof}[Proof of \cref{thm:main-discontinuity}]
	After interchanging $z_1$ and $z_2$, if necessary, set
	$\delta := W(z_1)-W(z_2)>0$.
	Fix $0<\eps<\delta/2$.  Choose
	$r>0$ so small that
	\[
	B=B_r(y_0),
	\qquad
	A_j=B_r(y_0+z_j),
	\quad j=1,2,
	\]
	have pairwise disjoint closures,
	$A_1\cup A_2\Subset\Omega_U$, and
	\[
	\abs{W(x-y)-W(z_j)}\leq\eps,
	\qquad
	x\in A_j,\quad y\in B,\quad j=1,2,
	\]
	where the latter condition can be imposed due to continuity of $W$ at each $z_j$ and since $x-y\in B_{2r}(z_j)$.
	
	Choose normalized functions
	\[
	f_j\in C_c^\infty(B_{r/2}(y_0+z_j)),
	\qquad j=1,2.
	\]
	Their supports are disjoint, so $f_1\perp f_2$.  Let $\Psi_n$
	be the packed Slater determinants supported in
	$B_{r/2}(y_0)^{n}$.  Since the cluster orbitals are orthogonal to
	$f_1$ and $f_2$, the CAR imply
	$\norm{a^*(f_1)\Psi_n} = \norm{a^*(f_2)\Psi_n} = 1$ and $\langle a^*(f_1)\Psi_n,a^*(f_1)a(f_2)a^*(f_2)\Psi_n\rangle=1$.
The strategy is to lower bound the operator norm via the matrix element between the states $a^*(f_1)\Psi_n$ and $a^*(f_2) \Psi_n$, i.e.,
		\begin{align}
		\norm{
			\tau_t\bigl(a^*(f_1)a(f_2)\bigr)
			-a^*(f_1)a(f_2)
		} &\geq  \Abs{ \cs{	a^*(f_1)\Psi_n,
				(\tau_t\bigl(a^*(f_1)a(f_2)\bigr)
				-a^*(f_1)a(f_2) )	a^*(f_2)\Psi_n}} \nonumber \\
			&=   \Abs{ \cs{	a^*(f_1)\Psi_n,
					\tau_t\bigl(a^*(f_1)a(f_2)\bigr)	a^*(f_2)\Psi_n} - 1} 			\label{eq:first lower bound}.
	\end{align}
	The Heisenberg time-evolution of the first term on the right-hand side of \eqref{eq:first lower bound} can be moved to the states on the left and right. Subsequently, we approximate the fully time-evolved states $e^{-\i tH_{n+1}}a^*(f_j)\Psi_n$  by the separately evolved states $e^{-2\i nW(z_j)t}a^*(f_{j,t})\Psi_n(t)$:
		\begin{align}
		&\left\langle
		a^*(f_1) \Psi_n,
		\tau_t\bigl(a^*(f_1)a(f_2)\bigr)
		a^*(f_2)\Psi_n
		\right\rangle \nonumber
		\\ &\qquad  =  	\left\langle
		e^{-\i tH_{n+1}} 	a^*(f_1)\Psi_n,
		a^*(f_1)a(f_2)
		e^{-\i tH_{n+1}} a^*(f_2)\Psi_n
		\right\rangle  \nonumber \\
		&\qquad = 	e^{-2\i n(W(z_2) - W(z_1))t} \left\langle
		a^*(f_{1,t})\Psi_n(t),
		a^*(f_1)a(f_2)a^*(f_{2,t})\Psi_n(t)
		\right\rangle + R^{(1)}_n(t),  \label{eq: Wz2 Wz1 term}
	\end{align}
	where
	$
	f_{j,t}=e^{-\i t\ho}f_j
	$, $0\leq t\leq1/n$, 
	and where the error for the two replacements can be controlled with \cref{lem:localized-phase} as
	\begin{align*}
		\abs{ R^{(1)}_n(t) } \leq 4\eps nt+Cn^2t^2.
	\end{align*}
Note that the inner product in \eqref{eq: Wz2 Wz1 term} equals one for $t=0$, and we can control the error for positive time. \Cref{hyp:local-propagation} yields
	\[
	\norm{f_{1,t}-f_1}+\norm{f_{2,t}-f_2}
	\leq C\abs t.
	\]
	Then we use the CAR, the fact that $n_{f_1}$ and $n_{f_2}$ commute, and \cref{lem:quadratic-leakage} to obtain
	\begin{align*}
		1 - \left\langle
		a(f_1)a^*(f_1)a(f_2)a^*(f_2)
		\right\rangle_{n,t} &=
		\left\langle
		\left( 1 - 
		(1-n_{f_1})(1-n_{f_2}) \right)
		\right\rangle_{n,t} \leq
		\left\langle
		 \left( n_{f_1} + n_{f_2}  \right)
		\right\rangle_{n,t} \\
		&\leq \left\langle
		\left( \dG(\chr_{A_1} + \chr_{A_2})  \right)
		\right\rangle_{n,t} \leq C	n^2 t^2. 
\end{align*}
	Combining the last two estimates yields
\begin{align}
	\left\langle
		a^*(f_{1,t})\Psi_n(t),
		a^*(f_1)a(f_2)a^*(f_{2,t})\Psi_n(t)
		\right\rangle
		= 1 + R^{(2)}_n(t), \qquad \abs{R^{(2)}_n(t)} \leq C\bigl(\abs t+n^2t^2\bigr).
\label{eq:separated-matrix-one}
\end{align}
Inserting \eqref{eq:separated-matrix-one} into  \eqref{eq: Wz2 Wz1 term}, and using $W(z_1)-W(z_2)=\delta$ in the exponent,  we arrive at 
	\[
		\left\langle
		a^*(f_1)\Psi_n,
		\tau_t\bigl(a^*(f_1)a(f_2)\bigr)
		a^*(f_2)\Psi_n
		\right\rangle
 =
		e^{2\i n\delta t}   + R^{(1)}_n(t) +  e^{2\i n\delta t} R_n^{(2)}(t).
	\]
Inserting this into \eqref{eq:first lower bound}, and using the error estimates, we obtain
	\begin{align*}
		\norm{
			\tau_t\bigl(a^*(f_1)a(f_2)\bigr)
			-a^*(f_1)a(f_2)
		} \geq 
		\abs{e^{2\i n\delta t}-1}	-4\eps nt	- C\bigl(n^2t^2+\abs t\bigr).
\end{align*}
		Finally, set $t_n=\theta/n$, where $0<\theta\leq1$.  Then
\[
	\liminf_{n\to\infty}
	\norm{
		\tau_{t_n}\bigl(a^*(f_1)a(f_2)\bigr)
		-a^*(f_1)a(f_2)
	}
	\geq
	2\abs{\sin(\delta\theta)}-4\eps\theta-C\theta^2
	\geq
	\theta\Bigl(2\delta-4\eps-C\theta-\tfrac{\delta^3}{3}\theta^2\Bigr).
\]
Since $\eps<\delta/2$, the right-hand side is strictly positive
for all sufficiently small $\theta>0$, which shows \eqref{eq:main-transfer-discontinuity}.

Finally, the last statement follows from polarization. Writing
	\[
	a^*(f_1)a(f_2)	= \frac12\sum_{\ell=0}^3 \i^\ell n_{h_\ell}, \qquad \text{ with } h_\ell :=\frac{f_1+\i^\ell f_2}{\sqrt2},
	\]
	we have
	\[
		\norm{
			\tau_t\bigl(a^*(f_1)a(f_2)\bigr)
			-a^*(f_1)a(f_2)
		} \leq
		\frac12 \sum_{\ell=0}^3
		\norm{\tau_t(n_{h_\ell})-n_{h_\ell}} \leq \sum_{\ell=0}^3
		\norm{\tau_t(a^*(h_\ell))-a^*(h_\ell)}.
	\]
	Along the sequence $(t_n)$, at least one of the four terms in the middle (and therefore also on the right) is
	bounded away from zero on a subsequence.  
\end{proof}

\subsection{Non-invariance and constant interactions}
\label{sec:noninvariance}

The non-invariance follows from the discontinuity result, the separability of
the CAR algebra over a separable Hilbert space, and a classical automatic-continuity argument. The following lemma and its proof are a variant of \cite[Theorem 10.2.3]{hille_phillips}. It is restricted to isometric groups, but requires only a single orbit to be separably valued on a general set of positive Lebesgue measure.

\begin{lem}[Automatic continuity]
	\label{lem:automatic-continuity}
	Let $(U_t)_{t\in\IR}$ be a strongly continuous unitary group on a Hilbert space $\cH$, and set $\tau_t(A):=U_tAU_t^*$ for
	$A\in\cB(\cH)$. Fix $A\in\cB(\cH)$ and let
	$\mathfrak X\subseteq\cB(\cH)$ be a norm closed, norm separable
	subspace. Then
	\[
	E:=\{s\in\IR:\tau_s(A)\in\mathfrak X\}
	\]
	 is a Borel set. If $E$ has positive Lebesgue measure,
	then $\IR \rightarrow \cB(\cH),~ t\mapsto\tau_t(A)$ is norm continuous.
\end{lem}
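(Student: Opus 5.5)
The proof splits into a measurability part and a Pettis/Banach-type automatic-continuity part, following the classical argument of \cite[Theorem 10.2.3]{hille_phillips}.

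\emph{Borel measurability.} Since $\mathfrak X$ is norm closed and separable, pick a countable norm-dense set $\{X_k\}\subset\mathfrak X$. Then
\[
E=\bigcap_{m\in\NN}\bigcup_{k}\{s:\norm{\tau_s(A)-X_k}<1/m\}.
\]
So it suffices that $s\mapsto\norm{\tau_s(A)-X}$ is Borel for each fixed $X\in\cB(\cH)$. Fix a countable dense set $\{\xi_i\}$ in the unit ball of $\cH$ (if $\cH$ is nonseparable, restrict to the separable closed span generated by the orbits of countably many vectors; I would remark on this or assume separability). Then
\[
\norm{\tau_s(A)-X}=\sup_{i,j}\abs{\langle\xi_i,(U_sAU_s^*-X)\xi_j\rangle}.
\]
Each matrix element is continuous in $s$ by strong continuity of $U_s$. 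Hence the norm is lower semicontinuous, thus Borel, and $E$ is Borel. For nonseparable $\cH$ I would instead use that the supremum over all unit vectors of continuous functions is still lower semicontinuous, which avoids separability entirely.

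\emph{Continuity.} Assume $\abs{E}>0$. First, the function $g(s):=\tau_s(A)$ restricted to $E$ takes values in the separable space $\mathfrak X$. For each $X_k$, the function $s\mapsto\norm{\tau_s(A)-X_k}$ is Borel, so $g|_E$ is strongly (Bochner) measurable by Pettis' theorem. By Lusin's theorem there is a compact $K\subseteq E$ with $\abs K>0$ on which $g$ is norm continuous, hence uniformly continuous. Now use the group structure and isometry:
\[
\norm{\tau_{s+h}(A)-\tau_s(A)}=\norm{\tau_h(A')-A'}
\]
is independent of how we shift. More precisely, for $s,s+h\in K$ we get $\norm{\tau_{h}(\tau_s A)-\tau_s A}$, but we need a statement about $A$ itself. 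Since $\tau_s$ is isometric, $\norm{\tau_{s+h}(A)-\tau_s(A)}=\norm{\tau_h(A)-A}$. This is the key identity: for $s\in K\cap(K-h)$, the quantity $\norm{\tau_h(A)-A}$ is bounded by the modulus of continuity of $g$ on $K$ at $\abs h$. By Steinhaus' theorem, or directly since $h\mapsto\abs{K\cap(K-h)}$ is continuous and positive at $0$, the set $K\cap(K-h)$ is nonempty for all small $\abs h$. Hence $\norm{\tau_h(A)-A}\to0$ as $h\to0$, and then
\[
\norm{\tau_{t+h}(A)-\tau_t(A)}=\norm{\tau_h(A)-A}
\]
gives norm continuity everywhere.

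\emph{Main obstacle.} The delicate step is justifying measurability of $g|_E$ in norm, since $\cB(\cH)$ is nonseparable and the orbit need not be separably valued off $E$. This is exactly where separability of $\mathfrak X$ enters. I would handle it via the lower semicontinuity of $s\mapsto\norm{\tau_s(A)-X_k}$, which makes the preimages of balls around $X_k$ Borel. Since these balls generate the Borel $\sigma$-algebra of the separable metric space $\mathfrak X$, $g|_E\colon E\to\mathfrak X$ is Borel measurable into a separable metric space, and Lusin's theorem applies directly without invoking Pettis.
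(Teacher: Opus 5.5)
Your proof is correct. It shares the paper's skeleton: Borel measurability of $E$ from lower semicontinuity of $s\mapsto\norm{\tau_s(A)-X}$ together with a countable dense subset of $\mathfrak X$, Borel measurability of the orbit on $E$ as a map into the separable metric space $\mathfrak X$, and the identity $\norm{\tau_{s+h}(A)-\tau_s(A)}=\norm{\tau_h(A)-A}$ for $s,s+h$ in a good set.

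The difference is the final analytic step.
\begin{itemize}
\item The paper shrinks $E$ to a bounded set of positive measure and sets $g(s)=\chr_E(s)\tau_s(A)$. It uses Pettis' theorem to place $g$ in the Bochner space $L^1(\IR;\mathfrak X)$. It then integrates the identity over $E\cap(E-h)$ to get $\abs{E\cap(E-h)}\,\norm{\tau_h(A)-A}\leq\norm{g(\cdot+h)-g}_{L^1}$. The right-hand side tends to $0$ by continuity of translations in the Bochner space.
\item You apply Lusin's theorem, which is valid for Borel maps into a separable metric space once $E$ is intersected with a bounded interval. This gives a compact $K\subseteq E$ of positive measure on which the orbit is uniformly continuous. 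Steinhaus' theorem ($K\cap(K-h)\neq\varnothing$ for small $\abs h$) then lets you evaluate the identity at a single point.
\end{itemize}

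Your argument is pointwise rather than averaged. It avoids Bochner integration entirely and uses nothing about $\mathfrak X$ beyond its being a closed separable subset, at the cost of Lusin's theorem for metric-space-valued maps. The paper's version is shorter given the cited translation-continuity theorem, which is itself usually proved by a density argument of the same flavor.

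Your hedge about nonseparable $\cH$ is unnecessary. The supremum over all unit vectors of the continuous functions $s\mapsto\norm{(\tau_s(A)-X)\xi}$ is already lower semicontinuous, which is exactly how the paper argues.
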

\begin{proof}
	For each $C\in\cB(\cH)$, the function $s\mapsto\norm{\tau_s(A)-C} = \sup_{\nn{\xi} = 1}\norm{(\tau_s(A)-C)\xi} $ is a
	supremum of continuous functions and hence Borel measurable. 
	If $(C_k)_{k\in\IN}$ is a dense sequence in $\mathfrak X$, then we can write
	\[E=\{s:\inf_k\norm{\tau_s(A)-C_k}=0\},\] 
	so $E$ is a Borel set.
	
	Now assume $\abs E>0$, where $\abs{\,\cdot\,}$ denotes the Lebesgue
	measure. We further assume that $E$ is a bounded set, otherwise shrink it to a smaller set with positive measure. 
	 Consider the function
	\begin{align*}
		g \: \IR \rightarrow \mathfrak X, \quad g(s) := \chr_{E}(s)\tau_{s}(A).
	\end{align*}
	By the first step, $s\mapsto\norm{g(s)-C}$ is Borel measurable for every $C\in\mathfrak X$. Since  $\mathfrak X$ is separable, this implies that $g$ is also Borel measurable: every open subset of $\mathfrak X$ is a countable union of	balls $B_r(C)$, $r >0$, $C \in \mathfrak X$, and $g^{-1}(B_r(C)) = \{s \in \IR : \nn{g(s)-C} < r\}$. Then $g$ is weakly measurable and separably valued, and hence Bochner measurable by Pettis's theorem \cite[Theorem 1.1]{pettis_measurable}.
	As  $\norm{g(s)}\leq\nn A$ for all $s$ and $E$ is bounded, we get
	$g\in L^1(\IR;\mathfrak X)$. 
	
	Note that by the definition of $g$, we have for all $h\in \IR$ and $s\in E\cap(E-h)$,
	\[
	\norm{g(s+h)-g(s)}
	=\norm{\tau_{s+h}(A)-\tau_s(A)}
	=\norm{\tau_h(A)-A}.
	\]
	Hence,
	\[
	\abs{E\cap(E-h)}\,\norm{\tau_h(A)-A}
	\leq\int_\IR\norm{g(s+h)-g(s)}\,\d s \overset{h\to 0}\longrightarrow 0
	\]
	where the last step follows from continuity of translations in the $L^1$-Bochner space, see e.g. \cite[Theorem 3.8.3]{hille_phillips}. The first factor is positive, as $\abs{E\cap(E-h)} \to \abs E > 0$ for $h \to 0$. Thus, $\norm{\tau_h(A)-A}\to0$, and continuity at
	every $t\in\IR$ follows from
	$\norm{\tau_{t+h}(A)-\tau_t(A)}=\norm{\tau_h(A)-A}$.
\end{proof}
The non-invariance result now follows immediately as the contrapositive of this lemma. 
\begin{proof}[Proof of \cref{thm:main-noninvariance}]
	Since $L^2(\IR^d)$ is separable, the CAR algebra
	$\Af_{\mathrm{CAR}}$ is separable as well, cf.\ \cite[Theorem 5.2.5]{BR2}. If the Borel set
	$\{t\in\IR:\tau_t(A)\in\Af_{\mathrm{CAR}}\}$ had positive Lebesgue
	measure, \cref{lem:automatic-continuity} with
	$\mathfrak X=\Af_{\mathrm{CAR}}$ would imply that
	$t\mapsto\tau_t(A)$ is norm continuous, contradicting
	\cref{thm:main-discontinuity}. 	Since $a^*(f_1)a(f_2)\in\Af^0_{\mathrm{CAR}}\subset\Af_{\mathrm{CAR}}$, neither algebra
	is invariant.
\end{proof}

\begin{proof}[Proof of \cref{prop:constant-main}]
If $W \equiv c$, we can write
	\[
	\Hb=\dG(h)+c\Nf(\Nf-1).
	\]
	The two operators commute, and as $\Nf a^*(f) = a^*(f)(\Nf+1)$, we have
	\begin{align}
	\tau_t(a^*(f))
	=  	e^{\i t c\Nf(\Nf-1)} a^*(e^{\i th}f) e^{-\i t c\Nf(\Nf-1)}   =
	a^*(e^{\i th}f)e^{2\i ct\Nf}.
\label{eq:constant-pull-through}
	\end{align}
	Set
	$t_n=\frac{\pi}{2\abs c\,n}$. 
	For every $n$, choose a normalized $n$-particle Slater determinant
	$\Psi_n$ whose orbitals are orthogonal to both $f$ and
	$e^{\i t_nh}f$.  This is possible because their common orthogonal
	complement in $L^2(\IR^d)$ is infinite-dimensional.  Since
	$e^{2\i ct_nn}=-1$, \eqref{eq:constant-pull-through} yields
	\[
		\norm{\bigl(\tau_{t_n}(a^*(f))-a^*(f)\bigr)\Psi_n}
	=
		\norm{a^*(e^{\i t_nh}f+f)\Psi_n}
		=
		\norm{e^{\i t_nh}f+f}
		\to 2.
	\]
	Therefore non-invariance of the ordinary CAR algebra again follows from \cref{lem:automatic-continuity}.
	
	Finally, the last statement follows from the fact that
	$c\Nf(\Nf-1)$ commutes with any $A \in  \Af^0_{\mathrm{CAR}} $.
\end{proof}

\subsection{Examples of external potentials}
\label{sec:potential-examples}

Here we give the missing proofs of the statements in \Cref{cor:external-examples}. 

\begin{proof}[Proof of \ref{it:external simple}]
Since $U\in W^{k,\infty}(\IR^d;\IR)$, it is bounded and hence \cref{hyp:external-form} holds with $a=0$.
	
	Note that the free group $e^{\i t \omega(\po)}$ is isometric on $H^k$ and the Leibniz rule implies
	\[
	\norm{Uv}_{H^k}\leq C_k\norm U_{W^{k,\infty}}\norm v_{H^k}.
	\]
	By Duhamel's formula, for $t\geq0$,
	\[
	e^{\i t\ho}v =e^{\i t \omega(\po)}v
	+\i\int_0^t
	e^{\i(t-s)\omega(\po)}Ue^{\i s\ho}v\,\d s.
	\]
Thus,
	\[
	\norm{e^{\i t\ho}v}_{H^k}	\leq \norm v_{H^k} +C_k\norm U_{W^{k,\infty}}\int_0^t \norm{e^{\i s\ho}v}_{H^k}\,\d s,
	\]
	and Gronwall's inequality, applied also to negative times, yields
	\begin{align}
		\label{eq:free group norm est}
	\norm{e^{\i t\ho}v}_{H^k}
	\leq
	e^{C_k\norm U_{W^{k,\infty}}\abs t}\norm v_{H^k}.
	\end{align}
	
	Next, let $u\in C_c^\infty(\IR^d)$.  By \Cref{hyp:dispersion}, $\omega(\po) u \in H^k$, and by assumption $U u \in H^k$, so $\ho u\in H^k$.  Using
	\[
	e^{\i t\ho}u-u
	=
	\i\int_0^t e^{\i s\ho}\ho u\,\d s
	\]
	and \eqref{eq:free group norm est}, we obtain, for $\abs t\leq1$,
	\[
	\norm{e^{\i t\ho}u-u}_{H^k}\leq e^{C_k\norm U_{W^{k,\infty}}}\abs t\,\norm{\ho u}_{H^k}.
	\]
	Therefore \cref{hyp:local-propagation} holds with
	$m=k>d/2$ and $\Omega_U=\IR^d$.
\end{proof}
\begin{proof}[Proof of \ref{it:external coulomb}]
 By 
Cauchy--Schwarz and Hardy's inequality, we have for every $R\in\IR^3$ and $\varepsilon > 0$,
\[
    \int_{\IR^3}\frac{\abs{v(x)}^2}{\abs{x-R}}\,\d x
    \leq
    \norm{\abs{x-R}^{-1}v}\norm v
    \leq
    2\norm{\nabla v}\norm v
    \leq
    \varepsilon\norm{\nabla v}^2
    +\varepsilon^{-1}\norm v^2.
\]
Since there are only finitely many centres,
\cref{hyp:external-form} follows by taking $\varepsilon$ sufficiently
small.  The operator version of Hardy's
inequality also shows that the Coulomb terms are infinitesimally
operator-bounded with respect to $-\Delta$.  Hence
$\Def(\ho)=H^2(\IR^3)$
with equivalent graph and $H^2$-norms.
Set
\[
    \Omega_U=\IR^3\setminus\{R_1,\ldots,R_M\}.
\]
If $u\in C_c^\infty(\Omega_U)$, then $u\in \Def(\ho^2)$.
Invariance of the graph domain and the spectral theorem yield
\[
    \norm{e^{\i t\ho}u-u}_{H^2}
    \leq
    C\norm{e^{\i t\ho}u-u}_{\Def(\ho)}
    \leq
    C\abs t\bigl(\norm{\ho u}+\norm{\ho^2u}\bigr),
    \qquad \abs t\leq1.
\]
Thus \cref{hyp:local-propagation} holds with $m=2$.  
\end{proof}
Finally, notice that the additional condition
\[
y_0+z_j\notin\{R_1,\ldots,R_M\},
\qquad j=1,2,
\]
in \Cref{thm:main-discontinuity}
excludes only finitely many choices of $y_0$, so an admissible
translation always exists.

\subsection*{Acknowledgment}

OpenAI's GPT-5.6 Sol and GPT-6 Astra, and Anthropic's Claude Opus 5.5 were used to draft proofs, to search for related literature and to proofread the manuscript. All content was revised and verified by the author, who takes full responsibility for it.

\printbibliography

\end{document}